\documentclass[twocolumn,amsthm]{autart}

\usepackage{ifpdf}
\newif\ifarxivhtml
\ifpdf
  \arxivhtmlfalse
\else
  \arxivhtmltrue
\fi
\usepackage[T1]{fontenc}
\usepackage[utf8]{inputenc}
\usepackage{lmodern}
\usepackage{amsmath,amssymb,mathtools}
\usepackage{booktabs}
\usepackage{tabularx}
\usepackage{graphicx}
\usepackage{tikz}
\usetikzlibrary{calc,arrows.meta,positioning}
\graphicspath{{figures/}{./}}
\usepackage{enumitem}
\usepackage{url}
\usepackage[authoryear,round]{natbib}

\theoremstyle{plain}
\newtheorem{theorem}{Theorem}
\newtheorem{lemma}[theorem]{Lemma}
\newtheorem{corollary}[theorem]{Corollary}
\newtheorem{proposition}[theorem]{Proposition}

\theoremstyle{remark}
\newtheorem{remark}[theorem]{Remark}

\newcommand{\diag}{\operatorname{diag}}
\newcommand{\range}{\operatorname{range}}
\DeclareMathOperator{\adj}{adj}

\DeclareMathOperator{\disc}{disc}
\newcommand{\one}{\mathbf{1}}

\newsavebox{\bioPhoto}
\newlength{\bioTopT}
\newcommand{\biographyentry}[3]{%
  \par\bigskip
  \sbox{\bioPhoto}{%
    \begin{minipage}[b][1.25in][c]{1.0in}%
      \centering
      \includegraphics[width=1.0in,height=1.25in,keepaspectratio]{#1}%
    \end{minipage}%
  }%
  \settoheight{\bioTopT}{\mbox{T}}%
  \hangindent=1.14in
  \hangafter=-8
  \noindent\makebox[0pt][l]{%
    \hspace{-\hangindent}%
    \raisebox{\bioTopT}[0pt][0pt]{%
      \raisebox{-\ht\bioPhoto}[0pt][0pt]{\usebox{\bioPhoto}}%
    }%
  }%
  \noindent\textbf{#2}\ #3%
  \par\bigskip}

\ifarxivhtml
  \renewcommand{\vec}{\operatorname{vec}}
\fi
\newcommand{\LyapunovCoreNotationTable}{%
\begin{table*}[t]
\centering\small
\setlength{\tabcolsep}{5pt}
\renewcommand{\arraystretch}{1.04}
\caption{Core notation; other symbols are defined at first use.}
\label{tab:notation-core}
\begin{tabularx}{\textwidth}{@{}>{\raggedright\arraybackslash}p{.29\textwidth} >{\raggedright\arraybackslash}X@{}}
\toprule
Symbol & Meaning \\
\midrule
\(A_2=A+vw^\top;\quad \sigma(M)\) & Rank-one state-matrix update; eigenvalue multiset of \(M\), counted with algebraic multiplicity. \\
\(P,Q,R\) & Unique solutions of the two Lyapunov equations and the cross-Sylvester equation \((3)\)--\((5)\). \\
\(\mathcal U\subset\Omega^\circ\subset\Omega\subset\mathcal D\) & Modal seed; disjoint-spectrum domain; main-identity domain; largest domain on which \(P,Q,R\) are uniquely defined. \\
\(\chi_{A_2},\chi_{A_2}^*;\quad g(z),b(z),B\) & Characteristic and reciprocal polynomials; resolvent vector, reciprocal quotient, and the modal representation of \(b(A)\). \\
\(\delta_P,\delta_Q,\delta_R;\quad \widehat P,\widehat Q,\widehat R,\widehat N\) & Lyapunov-operator determinants and the polynomial numerator matrices obtained after clearing them. \\
\(Z=b(A)^{-1}P\) & Explicit similarity matrix on the disjoint-spectrum domain \(\Omega^\circ\). \\
\(\mathcal C_\infty,\mathcal O_\infty;\quad \theta_i,r_i\) & Reachability and inverse-model observability maps; their principal angles and the past/future process correlations. \\
\(\mathcal E_\pm,\mathcal Y_\pm;\quad H,\mathcal X\) & Past/future innovation and output spaces; finite-rank Hankel map and its active past state subspace. \\
\bottomrule
\end{tabularx}
\par\vspace{.2em}\footnotesize\raggedright Here \(M^\top\) is the algebraic (non-conjugating) transpose and \(\chi_{A_2}^*\) is the reciprocal polynomial, not an adjoint.
\end{table*}
}

\date{25 August 2026}

\begin{document}

\begin{frontmatter}

\runtitle{Kernel proof of the De Cock--De Moor identity}

\title{A kernel proof of the De Cock--De Moor
Lyapunov identity}

\ifarxivhtml
\author{Jonas Gillberg\textsuperscript{a,*}}
\else
\author[stockholm]{Jonas Gillberg\thanksref{footnoteinfo}}
\ead{jonas@zyqe.se}
\address[stockholm]{ZyQE, Stockholm, Sweden}
\fi
\ifarxivhtml
\author{Johan L\"ofberg\textsuperscript{b}}
\else
\author[linkoping]{Johan L\"ofberg}
\ead{johan.lofberg@liu.se}
\address[linkoping]{Division of Automatic Control, Department of
Electrical Engineering, Link\"oping University, Link\"oping, Sweden}
\fi

\ifarxivhtml
\else
\thanks[footnoteinfo]{Corresponding author: J.~Gillberg
(\texttt{jonas@zyqe.se}).}
\fi

\begin{keyword}
\ifarxivhtml
matrix algebra, Lyapunov equations, Lyapunov kernels, canonical
correlations, isospectrality, system theory
\else
matrix algebra \sep Lyapunov equations \sep Lyapunov kernels \sep canonical
correlations \sep isospectrality \sep system theory
\fi
\end{keyword}

\begin{abstract}
We prove the rank-one Lyapunov spectral identity recorded as Problem~9.1 in
the 2004 collection of unsolved problems in mathematical systems and control
theory.  Let $P,Q,R$ solve the coupled discrete Lyapunov and Sylvester
equations associated with $A$ and its rank-one update $A_2=A+vw^\top$.  When
the displayed inverses exist, we show that
$P^{-1}RQ^{-1}R^\top$ and $(I+PQ)^{-1}$ have the same characteristic
polynomial.  A rank-one determinant factorization of the equation for $Q$
produces a scalar bilinear kernel.  Evaluating it at the eigenvalues of $A$
and at their reciprocals gives
$RQ^{-1}R^\top=BQ^{-1}B=P-BPB$, after which the two target matrices are the
same two factors in opposite order.  Polynomial continuation extends
the identity from a nonempty open set of admissible systems to the full
admissible domain and
yields a determinant corollary without stability assumptions; when the spectra
of $A$ and $A_2$ are disjoint, $Z=b(A)^{-1}P$ gives an explicit similarity.
In the Schur-stable realization setting, the result recovers the associated
principal-angle and past/future canonical-correlation spectra.
\end{abstract}

\ifarxivhtml
\par\smallskip
\noindent\textsuperscript{a}ZyQE, Stockholm, Sweden.\par
\noindent\textsuperscript{b}Division of Automatic Control, Department of Electrical Engineering, Link\"oping University, Link\"oping, Sweden.\par
\smallskip
\noindent\textsuperscript{*}\textit{Corresponding author: J.~Gillberg (\texttt{jonas@zyqe.se}).}
\fi

\end{frontmatter}

\section{Introduction}\label{sec:intro}

How much does the past of a process reveal about its future? In stochastic
realization theory the answer is geometric: the canonical correlations between
past and future are the sines of the principal angles between a controllability
and an observability subspace. Following this to its algebraic core, the geometry reduces to a question about
three matrices. For
$A\in\mathbb R^{n\times n}$ and $v,w\in\mathbb R^n$, form the rank-one update
$A_2=A+vw^\top$ and let $P,Q,R$ solve the discrete Lyapunov and Sylvester equations
they generate; the question is whether
\begin{equation}\label{eq:intro-iso}
  P^{-1}RQ^{-1}R^\top \qquad\text{and}\qquad (I+PQ)^{-1}
\end{equation}
always share the same eigenvalues.

The proof has two algebraic steps. We first establish the
characteristic-polynomial identity on a nonempty open set of admissible data,
where the target matrices are products of the same two factors in reverse
order. After clearing denominators, every coefficient of their
characteristic-polynomial difference is a polynomial in the entries of
$A,v,w$. Since these polynomials vanish on a nonempty open set, they vanish
identically, and division by the nonzero denominators gives the result
throughout the natural domain. This is precisely Problem~9.1 in the 2004
collection of open problems in systems and control
\citep{BlondelMegretski2004}.

The identity is due to De~Cock and De~Moor, who checked it numerically, proved it
for $n=1$, and established the underlying relation by realization-theoretic means
\citep{DeCock2002,DeCockDeMoor2002}; \citet{Scherrer2002} treated the associated
relation by minimum-phase balanced truncation.

Building on that realization-theoretic setting, the present paper makes three
complementary contributions.  First, it gives a direct finite-dimensional
proof in which a Lyapunov-kernel identity reduces the spectral statement to
the classical $AB$--$BA$ principle.  Second, it identifies an explicit
similarity transformation when the spectra of $A$ and $A_2$ are disjoint.
Third, after clearing denominators, it extends the
characteristic-polynomial identity to the full domain excluding reciprocal
eigenvalue products and obtains the determinant corollary without stability
or definiteness assumptions.

On the open set, a coordinate change makes
$A=\diag(\alpha_1,\ldots,\alpha_n)$; $A_2$ need not be diagonalized.  For the
Lyapunov equation $Q=A_2^\top QA_2+ww^\top$, set
\[
 g(z)=(I-zA_2^\top)^{-1}w,\qquad
 b(z)=\frac{\det(zI-A_2)}{\det(I-zA_2)}.
\]
A rank-one determinant factorization gives
\[
 g(z)^\top Q^{-1}g(\zeta)=\frac{1-b(z)b(\zeta)}{1-z\zeta},
 \qquad b(z)b(1/z)=1.
\]
Evaluating at $z=\alpha_i$ and at $z=1/\alpha_i$ gives the single matrix
relation
\begin{align}
 RQ^{-1}R^\top&=BQ^{-1}B=P-BPB, \label{eq:intro-kernel}\\
 B&=\diag(b(\alpha_1),\ldots,b(\alpha_n)). \notag
\end{align}
Product reversal then proves equality of the characteristic polynomials on
\(\mathcal U\).
Thus evaluation at the eigenvalues and at their reciprocals gives two
instances of the same
Lyapunov kernel, and the spectral step is the elementary reversal of two
matrix factors.

The identity becomes polynomial after denominators are cleared and therefore
needs no stability or definiteness hypothesis.  The Schur-stable realization
in which it first arose is recovered separately in Proposition~\ref{prop:p91}.
Sections~\ref{sec:setup}--\ref{sec:proof} give the matrix proof and its
continuation; Sections~\ref{sec:intertwiner}--\ref{sec:consequences} give the
explicit similarity and the system-theoretic consequences.

\section{Statement}\label{sec:statement}

We first state the matrix identity and then its realization-theoretic
consequence.

Let $A \in \mathbb R^{n\times n}$ ($n\ge1$) and $v, w \in \mathbb R^n$, and set
$A_2 = A + vw^\top$. Throughout, $\sigma(\cdot)$ denotes the multiset of
complex eigenvalues counted with algebraic multiplicity, and
$\alpha_1,\dots,\alpha_n$, $\beta_1,\dots,\beta_n$ enumerate $\sigma(A)$
and $\sigma(A_2)$. An indexed family in braces, such as
$\{\cos^2\theta_i\}_{i=1}^n$, is likewise read as a multiset, with repetitions
retained. Assume that no product of two eigenvalues drawn from
$\sigma(A)\cup\sigma(A_2)$ equals $1$ --- self-pairs and repeats included:
$\alpha_i\alpha_j\neq1$, $\beta_i\beta_j\neq1$, and $\alpha_i\beta_j\neq1$
for all $i,j$ --- equivalently, that
$\delta_P:=\det(I-A\otimes A)$,
$\delta_Q:=\det(I-A_2^\top\otimes A_2^\top)$,
$\delta_R:=\det(I-A_2^\top\otimes A)$ are all nonzero. This is
exactly the condition under which the three coupled equations
\begin{align}
  P &= APA^\top + vv^\top, \label{eq:P}\\
  Q &= A_2^\top Q A_2 + ww^\top, \label{eq:Q}\\
  R &= ARA_2 + vw^\top \label{eq:R}
\end{align}
have unique solutions $P, Q, R \in \mathbb R^{n\times n}$ --- \eqref{eq:P} and
\eqref{eq:Q} are discrete Lyapunov equations, \eqref{eq:R} a discrete Sylvester
equation --- with $P, Q$ symmetric (\citealp[\S12.5]{LancasterRodman1995};
\citealp[\S11.21 and~\S12.21]{Bernstein2009}).
\begin{remark}[Conventions: $r_i$]\label{rem:conventions}
The past/future canonical correlations are written
$r_1\ge r_2\ge\cdots$, while the principal angles are ordered as
$\theta_1\le\dots\le\theta_n$. In the system-theoretic consequence of
Problem~9.1, the largest $n$ correlations satisfy
$r_i=\sin\theta_{n+1-i}$, $1\le i\le n$, and all remaining correlations
vanish: $r_{n+1}=r_{n+2}=\cdots=0$. The $\rho_i$ of
\citet{BlondelMegretski2004} are our $r_i$.
\end{remark}

\begin{remark}[Relation to Problem~9.1]\label{rem:p91verbatim}
Equations \eqref{eq:P}--\eqref{eq:R} are exactly the three independent blocks
of the problem's single $2n\times2n$ Lyapunov equation
\citep[p.~288]{BlondelMegretski2004},
\[
  \Pi=\mathcal A\,\Pi\,\mathcal A^\top
      +\begin{pmatrix}v\\w\end{pmatrix}\begin{pmatrix}v^\top&w^\top\end{pmatrix},
\]
where $\Pi:=\bigl(\begin{smallmatrix}P&R\\R^\top&Q\end{smallmatrix}\bigr)$,
$\mathcal A:=\bigl(\begin{smallmatrix}A&0\\0&A_2^\top\end{smallmatrix}\bigr)$,
and $A_2=A+vw^\top$: its $(1,1)$, $(2,2)$, and $(1,2)$ blocks are
\eqref{eq:P}, \eqref{eq:Q}, \eqref{eq:R} (the $(2,1)$ block is the transpose
of \eqref{eq:R}). The original hypothesis --- no two eigenvalues
$\lambda_i,\lambda_j$ of $\bigl(\begin{smallmatrix}A&0\\0&A_2\end{smallmatrix}\bigr)$
with $\lambda_i\lambda_j=1$, $i,j=1,\dots,2n$ --- is precisely
$\delta_P\delta_Q\delta_R\neq0$, which (loc.\ cit., and as above) makes the
block equation uniquely solvable. The stated conclusion --- that, when
$P$, $Q$, $I+PQ$ are nonsingular, ``$P^{-1}RQ^{-1}R^\top$ and $(I_n+PQ)^{-1}$
have the same eigenvalues'' --- is verbatim the assertion of
Theorem~\ref{thm:main}, with $\sigma(\cdot)$ the eigenvalue multiset. No
stability, definiteness, or stochastic interpretation is assumed at this level;
those enter only in the realization reading of Proposition~\ref{prop:p91}.
The rank-one hypothesis is essential for the unrestricted assertion: the
original problem explicitly records counterexamples when $v,w$ are replaced by
$V,W\in\mathbb R^{n\times m}$, $m>1$ \citep[p.~288]{BlondelMegretski2004}. This
records the existence of higher-rank counterexamples; it does not assert
failure for every higher-rank datum.

The principal-angle and canonical-correlation interpretation of this matrix
identity is stated explicitly in Proposition~\ref{prop:p91}.
\end{remark}

\begin{theorem}[\textnormal{Rank-one Lyapunov spectral identity, proved in
Section~\ref{sec:proof}}]\label{thm:main}%
Let $(A,v,w)\in\mathcal D$ --- $A\in\mathbb R^{n\times n}$, $v,w\in\mathbb R^n$,
$A_2=A+vw^\top$, no two eigenvalues among $\sigma(A)\cup\sigma(A_2)$ with
product $1$, self-pairs and repeats included, i.e.\
$\delta_P\delta_Q\delta_R\neq0$ --- so that
\eqref{eq:P}--\eqref{eq:R} have unique solutions $P,Q,R$.
Assume in addition that $P$, $Q$, and $I+PQ$ are nonsingular.
Then
\[
  \sigma\bigl(P^{-1}RQ^{-1}R^\top\bigr) = \sigma\bigl((I+PQ)^{-1}\bigr).
\]
\end{theorem}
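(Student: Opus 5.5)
The plan is to first prove the identity on a nonempty open set $\mathcal U\subset\mathcal D$, and then extend it to all of $\mathcal D$ by polynomial continuation. On $\mathcal U$ we take $A$ diagonalizable with simple eigenvalues, no $\alpha_i$ equal to $0$ or in $\sigma(A_2)$, and $P,Q,I+PQ$ invertible (for instance near a stable diagonal $A$ with generic $v,w$). Because all three equations \eqref{eq:P}--\eqref{eq:R} transform covariantly, we may change coordinates so that $A=\diag(\alpha_1,\dots,\alpha_n)$. Concretely, $A\mapsto T^{-1}AT$, $v\mapsto T^{-1}v$, $w\mapsto T^\top w$ sends $(P,Q,R)$ to $(T^{-1}PT^{-\top},\,T^\top QT,\,T^{-1}RT)$, and both target matrices undergo a similarity, so their spectra are unchanged. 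In these coordinates the solutions are explicit. We have $P_{ij}=v_iv_j/(1-\alpha_i\alpha_j)$, row $i$ of $R$ is $v_i\,w^\top(I-\alpha_iA_2)^{-1}=v_i\,g(\alpha_i)^\top$, and $Q=\sum_k (A_2^\top)^k ww^\top A_2^k$ wherever that series converges. By analytic continuation it suffices to work with the kernel $g(z)^\top Q^{-1}g(\zeta)$.

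The central step, which I expect to be the main obstacle, is the kernel identity
\[
 g(z)^\top Q^{-1}g(\zeta)=\frac{1-b(z)b(\zeta)}{1-z\zeta}.
\]
The first thing I would try is to subtract the equation \eqref{eq:Q} sandwiched between $(I-zA_2^\top)^{-1}$ and $(I-\zeta A_2)^{-1}$. This gives the Stein-type relation $(I-zA_2^\top)Q(I-\zeta A_2)-(z-A_2^\top)Q(\zeta-A_2)\cdot$(suitable form) $=(1-z\zeta)Q-\dots$, which expresses $Q^{-1}$ through the rank-one matrix $ww^\top$. The resulting scalar quantities are then identified by the matrix determinant lemma. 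Using $\det(I-zA_2^\top+\text{rank one})$ and the identity $\det(zI-A_2)/\det(I-zA_2)=b(z)$, we obtain the factor $1-b(z)b(\zeta)$. As a fallback, I would reduce by a further similarity to the case where $(A_2^\top,w)$ is in controllable companion form. There $Q$ is a Schur–Cohn/Bezoutian-type matrix of $\chi_{A_2}$ and $\chi_{A_2}^*$, and the kernel formula is the classical Christoffel–Darboux (Gohberg–Semencul) identity for the inverse of a Bezoutian. The relation $b(z)b(1/z)=1$ is immediate from the definition.

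Next we evaluate the kernel. Set $G=[g(\alpha_1),\dots,g(\alpha_n)]$ and $V=\diag(v)$, so that $R=VG^\top$. Then
\[
(RQ^{-1}R^\top)_{ij}=v_iv_j\,\frac{1-b_ib_j}{1-\alpha_i\alpha_j}=(P-BPB)_{ij}.
\]
To obtain the other form we use $w^\top g(z)$ and the Sherman–Morrison relation $(I-zA)^{-1}v$ versus $g$. Evaluating the kernel at the reciprocals $1/\alpha_i$, together with $b(1/\alpha_i)=1/b(\alpha_i)$, should produce $R=BQ$-type relations, in the form $R^\top=QB'$ after the coordinate identification. This yields $RQ^{-1}R^\top=BQ^{-1}B$, and equivalently $BP^{-1}$-relations giving $R^\top P^{-1}R\sim$ an expression in $Q,B$. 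Then
\[
P^{-1}RQ^{-1}R^\top=I-P^{-1}BPB,
\]
and we combine this with $BQ^{-1}B=P-BPB$, i.e.\ $(BPB)^{-1}\cdots$. The aim is to write $P^{-1}RQ^{-1}R^\top=XY$ and $(I+PQ)^{-1}=YX$ with $X=P^{-1}B$ and $Y=Q^{-1}B(I+PQ)^{-1}\cdots$. Rather than fix this guess in advance, I would pin down the exact factors by substituting $BQ^{-1}B=P-BPB$ into $(I+PQ)^{-1}=Q^{-1}(Q^{-1}+P)^{-1}$, exhibiting the two products in reverse order. The spectral equality on $\mathcal U$ then follows from $\sigma(XY)=\sigma(YX)$ for square matrices.

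Finally, for continuation, we multiply by $\delta_P,\delta_Q,\delta_R$ so that $\widehat P=\delta_PP$ and the other numerator matrices are polynomial in $(A,v,w)$, and clear the inverses using adjugates. Concretely, we show that
\[
\det(\widehat P)^{N}\det(\widehat Q)^{N}\det(I+PQ\text{ cleared})^{N}\bigl[\chi_{P^{-1}RQ^{-1}R^\top}(\lambda)-\chi_{(I+PQ)^{-1}}(\lambda)\bigr]
\]
is a polynomial in the entries of $(A,v,w,\lambda)$ for suitable $N$. This polynomial vanishes on the nonempty open set $\mathcal U\times\mathbb C$, hence identically. Dividing by the nonzero factors wherever $P$, $Q$ and $I+PQ$ are invertible then gives the theorem on all of $\mathcal D$.
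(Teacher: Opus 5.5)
\textbf{Verdict: the overall route matches the paper's, but the step proving $BQ^{-1}B=P-BPB$ rests on a false claim and omits the key fact.} Your plan follows the paper step by step: an open seed with diagonalizable $A$, modal coordinates, a Stein factorization with the matrix determinant lemma for the kernel, evaluation at $z=\alpha_i,\zeta=\alpha_j$ to get $RQ^{-1}R^\top=P-BPB$, reversal with $X=P^{-1}B$ and $Y=Q^{-1}B$, and polynomial continuation. The flawed step is the evaluation at reciprocal eigenvalues, and the factor reversal depends entirely on it.

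The mechanism you propose there does not work. In general there is no diagonal $B'$ with $R^\top=QB'$: for $n\ge2$ the kernel itself gives $(RQ^{-1})_{ij}=v_iv_j(b_i-b_j)/(b_j(\alpha_j-\alpha_i))$ for $i\ne j$, which is generically nonzero. Even if such a $B'$ existed, it would give $RQ^{-1}R^\top=B'QB'$, not $BQ^{-1}B$.

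The missing observation is that, in modal coordinates, the rank-one structure makes the resolvent vector at a reciprocal eigenvalue a coordinate vector. From $A_2^\top e_i=\alpha_ie_i+v_iw$, i.e.\ $(\alpha_iI-A_2^\top)e_i=-v_iw$, you get $g(1/\alpha_i)=-(\alpha_i/v_i)e_i$. Here $v_i\ne0$ is automatic, since $v_i=0$ would give $e_i^\top A_2=\alpha_ie_i^\top$, contradicting $\alpha_i\notin\sigma(A_2)$. Substituting $z=1/\alpha_i$, $\zeta=1/\alpha_j$ into the kernel and using $b(1/\alpha_i)=1/b_i$ then reads off $Q^{-1}$ entrywise: $b_ib_j(Q^{-1})_{ij}=v_iv_j(1-b_ib_j)/(1-\alpha_i\alpha_j)$, that is, $BQ^{-1}B=P-BPB$. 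This identity, not a relation between $R$ and $Q$, is what yields $RQ^{-1}R^\top=BQ^{-1}B$. Via $Q^{-1}+P=B^{-1}PB^{-1}$ it also yields $(I+PQ)^{-1}=Q^{-1}BP^{-1}B$, the factorization your final substitution is aiming for.

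Two smaller repairs are needed.
\begin{enumerate}
\item Your Stein relation pairs the variables wrongly: $(I-zA_2^\top)Q(I-\zeta A_2)-(zI-A_2^\top)Q(\zeta I-A_2)=(1-z\zeta)ww^\top+(z-\zeta)(QA_2-A_2^\top Q)$, which is not rank one. The correct inverse-free identity is $(I-\zeta A_2^\top)Q(I-zA_2)-(1-z\zeta)ww^\top=(A_2^\top-zI)Q(A_2-\zeta I)$. With $M=(I-\zeta A_2^\top)Q(I-zA_2)$ one has $g(z)^\top Q^{-1}g(\zeta)=w^\top M^{-1}w$, and the determinant lemma gives $1-(1-z\zeta)w^\top M^{-1}w=b(z)b(\zeta)$ directly. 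So you need neither the series for $Q$, nor analytic continuation, nor the Bezoutian fallback.
\item In the continuation, your clearing factor must also contain $\delta_R^{2n}$, because $R=\widehat R/\delta_R$ enters $P^{-1}RQ^{-1}R^\top$ twice. Alternatively, as the paper does, apply the identity principle directly to the rational coefficients, whose denominators are nonzero on $\Omega$.
\end{enumerate}
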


\begin{proposition}[System-theoretic consequence of Problem~9.1]\label{prop:p91}
Consider the real scalar forward-innovation model driven by the scalar,
zero-mean, unit-variance white innovation sequence $\{\varepsilon(k)\}$,
\[
  x(k+1)=Ax(k)+K\varepsilon(k),\qquad
  y(k)=c^\top x(k)+\varepsilon(k),
\]
and make the substitution used in the original problem,
\[
  v=K,\qquad w=-c,\qquad A_2=A+vw^\top=A-Kc^\top.
\]
Assume that $A$ and $A_2$ are Schur-stable, $(A,v)$ is controllable, and
$(A_2,w)$ is observable.
With
\[
  \begin{aligned}
    \mathcal C_\infty&=[v,Av,A^2v,\dots],\\
    \mathcal O_\infty&=[w,A_2^\top w,(A_2^\top)^2w,\dots],
  \end{aligned}
\]
the solutions of \eqref{eq:P}--\eqref{eq:R} are
\[
  P=\mathcal C_\infty\mathcal C_\infty^\top,\qquad
  Q=\mathcal O_\infty\mathcal O_\infty^\top,\qquad
  R=\mathcal C_\infty\mathcal O_\infty^\top.
\]
If $\theta_1\le\dots\le\theta_n$ are the principal angles between
$\range(\mathcal C_\infty^\top)$ and $\range(\mathcal O_\infty^\top)$, then
\[
  \{\cos^2\theta_i\}_{i=1}^{n}
    =\sigma(P^{-1}RQ^{-1}R^\top).
\]
The original conjecture in Problem~9.1 is the pure matrix identity of
Theorem~\ref{thm:main}; it does not require this stochastic realization. In
the forward-innovation realization above, that theorem gives
\[
  \{\cos^2\theta_i\}_{i=1}^{n}
    =\sigma((I+PQ)^{-1}).
\]
Let $r_1\ge r_2\ge\cdots\ge0$ be the canonical correlations between the
closed linear spans of the past and future outputs.
Then
\[
  \begin{aligned}
    r_i&=\sin\theta_{n+1-i} &&(1\le i\le n),\\
    r_{n+1}&=r_{n+2}=\cdots=0,
  \end{aligned}
\]
or, equivalently,
$\{r_i^2\}_{i=1}^{n}=\sigma\bigl(I-(I+PQ)^{-1}\bigr)$.
Off the realization face no stability or stochastic hypothesis is used: the
cleared identity \eqref{eq:cleared} holds on all of $\mathcal D$
(Lemma~\ref{lem:divideback}), and the spectral identity holds wherever
$P,Q,I+PQ$ are nonsingular.
\end{proposition}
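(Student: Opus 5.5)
The plan has four parts. First I identify the Gramians and the principal angles. Then I transfer Theorem~\ref{thm:main} to the angles. Finally I pass to canonical correlations through a Hankel/innovation argument.

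\emph{Gramians and angles.} Schur stability makes the series $\sum_k A^k vv^\top (A^\top)^k$, $\sum_k (A_2^\top)^k ww^\top A_2^k$ and $\sum_k A^k vw^\top A_2^k$ converge absolutely. Each series satisfies the corresponding equation \eqref{eq:P}--\eqref{eq:R}, as one checks by shifting the summation index. Uniqueness then identifies the series with $P,Q,R$, since stability gives $|\alpha_i\alpha_j|<1$ and so on, hence $\delta_P\delta_Q\delta_R\neq0$. This yields $P=\mathcal C_\infty\mathcal C_\infty^\top$, $Q=\mathcal O_\infty\mathcal O_\infty^\top$ and $R=\mathcal C_\infty\mathcal O_\infty^\top$.

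Controllability and observability give $P,Q\succ0$, so $\mathcal C_\infty^\top$ and $\mathcal O_\infty^\top$ have full column rank $n$ in $\ell^2$. The orthogonal projectors onto their ranges are $\Pi_C=\mathcal C_\infty^\top P^{-1}\mathcal C_\infty$ and $\Pi_O=\mathcal O_\infty^\top Q^{-1}\mathcal O_\infty$. The squared cosines of the principal angles are the nonzero-block eigenvalues of $\Pi_C\Pi_O\Pi_C$. This operator is compressed to $\range\mathcal C_\infty^\top$, and in the basis given by the columns of $\mathcal C_\infty^\top$ it acts by $P^{-1}RQ^{-1}R^\top$. The same conclusion follows from the AB--BA principle applied to $P^{-1}\mathcal C_\infty\cdot\Pi_O\mathcal C_\infty^\top$. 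Hence $\{\cos^2\theta_i\}=\sigma(P^{-1}RQ^{-1}R^\top)$.

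\emph{Applying Theorem~\ref{thm:main}.} Since $P,Q\succ0$, the matrix $PQ$ is similar to $P^{1/2}QP^{1/2}\succeq0$. So $I+PQ$ is nonsingular and the theorem applies, giving $\{\cos^2\theta_i\}=\sigma((I+PQ)^{-1})$.

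\emph{Canonical correlations.} With $w=-c$ the inverse model is $x(k+1)=A_2x(k)+Ky(k)$, $\varepsilon(k)=y(k)+w^\top x(k)$. Schur stability of $A_2$ makes it causally invertible, so the past output span $\mathcal Y_-$ equals the past innovation span $\mathcal E_-$. Expanding $x(0)$ gives $x(0)=\mathcal C_\infty\,(\varepsilon(-1),\varepsilon(-2),\dots)^\top$, so in the orthonormal basis $\{\varepsilon(-k)\}$ the state $x(0)$ corresponds to $\mathcal C_\infty$. Likewise, the future outputs in terms of the future innovations $\mathcal E_+$ are lower-triangular with a Toeplitz part plus $\mathcal O_\infty^\top$-type terms driven by $x(0)$.

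The canonical correlations between $\mathcal Y_-$ and $\mathcal Y_+$ are the singular values of the projection of $\mathcal Y_+$ onto $\mathcal Y_-$. Because $\mathcal E_-\perp\mathcal E_+$, this projection factors through the finite-dimensional space spanned by $x(0)$, which is the space $\mathcal X$. Hence at most $n$ correlations are nonzero, and $r_{n+1}=\cdots=0$.

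To compute them, I express the future as seen from the past through the inverse (backward-predictor) model. The future innovation space is orthogonal to the past, and the part of $\mathcal Y_+$ not explained by $\mathcal E_+$ is carried by $\mathcal O_\infty$ acting on $x(0)$. The correlations are then the cosines of the angles between the past data embedded via $\mathcal C_\infty^\top$ and the complement of the future innovation contribution. This identifies $r_i^2$ with $1-\cos^2\theta$, i.e.\ with $\sin^2$ of the angles, reversed in order. Finally, $\sigma(I-(I+PQ)^{-1})$ follows by the spectral mapping $\lambda\mapsto1-\lambda$.

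\emph{Main obstacle.} This last step is the hard one: making rigorous that the canonical correlations are the sines rather than the cosines. It amounts to identifying $\range\mathcal O_\infty^\top$ with the part of $\mathcal Y_+$ orthogonal to the past-independent piece. I would follow De Cock--De Moor's realization argument: write $\mathcal Y_+=\mathcal E_+\oplus\,$(state contribution) in the inverse-model coordinates, and compute the Gram matrices explicitly. The target is that the conditional covariance of $x(0)$ given the future equals $(P^{-1}+Q)^{-1}$, whence the correlation matrix is $I-(I+PQ)^{-1}$ up to similarity.
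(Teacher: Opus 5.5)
\textbf{Gap: the canonical correlations are never actually computed.}

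Your first parts are correct and follow the paper. These are the Gramian identification and the angle spectrum; compressing $\Pi_C\Pi_O\Pi_C$ to $\range\mathcal C_\infty^\top$ is equivalent to the paper's singular values of $P^{-1/2}RQ^{-1/2}$. The nonsingularity of $I+PQ$, the equality $\mathcal Y_-=\mathcal E_-$, and the bound of at most $n$ nonzero correlations are also correct. What is missing is the computation of the nonzero correlations, which is the real content of the last assertion and which you only state as a target.

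Your sentence identifying them with cosines of angles between ``the past data embedded via $\mathcal C_\infty^\top$ and the complement of the future innovation contribution'' is not an argument. The $\cos^2\theta_i$ depend on the cross-Gramian $R$, whereas the correlations depend only on $\sigma(PQ)$. What links the two is exactly Theorem~\ref{thm:main}, so a direct geometric identification would amount to reproving it.

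Two structural claims you would rely on are also inaccurate.
\begin{itemize}
\item The raw future outputs carry the state through $c^\top A^kx(0)=-w^\top A^kx(0)$, i.e.\ through $A$, not $A_2$. The observability sequence of $(A_2,w)$, and hence $Q$, appears only after applying the causal unit-triangular inverse filter $\eta$ (with $\eta(0)=1$, $\eta(j+1)=w^\top A_2^jv$). This gives $y_{\mathrm w}(k)=\sum_{j=0}^{k}\eta(j)y(k-j)=\varepsilon(k)-w^\top A_2^kx(0)$ and leaves $\mathcal Y_+$ unchanged.
\item $\mathcal Y_+$ is not $\mathcal E_+\oplus(\text{state part})$. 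Read literally, that sum contains $\mathcal X=\operatorname{span}\{x(0)_i\}\subset\mathcal Y_-$ and would force correlations equal to $1$. Set $p_i:=x(0)_i$ (Gram matrix $P$) and $q_i:=\sum_{k\ge0}((A_2^\top)^kw)_i\varepsilon(k)\in\mathcal E_+$ (Gram matrix $Q$). Then $\mathcal Y_+$ is the graph $\{f-Hf:f\in\mathcal E_+\}$ of $-H$, where $Hf=\sum_i\langle q_i,f\rangle p_i$.
\end{itemize}

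\textbf{How to close it.} Your final target is the right one, and carrying it out is essentially the paper's Step~4.
\begin{itemize}
\item Every $u\in\mathcal E_-\cap\mathcal X^\perp$ is orthogonal to $\mathcal Y_+$, so only $u=a^\top x(0)$ matters.
\item Minimize $\|u-(f-Hf)\|^2=\|u+Hf\|^2+\|f\|^2$ over $f\in\mathcal E_+$. The minimizer is $f=\sum_jc_jq_j$ with $(I+PQ)c=-Pa$.
\item Hence the past component of the projection of $u$ onto $\mathcal Y_+$ is $\bigl(Q(I+PQ)^{-1}Pa\bigr)^\top x(0)$. Equivalently, the error covariance of $x(0)$ given the future is $(P^{-1}+Q)^{-1}$, and $P^{-1}\bigl(P-(P^{-1}+Q)^{-1}\bigr)=I-(I+QP)^{-1}$.
\item So the squared nonzero correlations are $\sigma\bigl(I-(I+PQ)^{-1}\bigr)$, all in $(0,1)$ because $PQ$ has positive spectrum.
\end{itemize}
The sine enters only at this point. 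Combining this with $\{\cos^2\theta_i\}=\sigma\bigl((I+PQ)^{-1}\bigr)$ from Theorem~\ref{thm:main}, and ordering both families, gives $r_i=\sin\theta_{n+1-i}$.
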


\begin{proof}
The proof uses three standard reductions from systems theory.  First,
normalized reachability and observability maps identify the principal-angle
spectrum.  Second, stable causal inversion represents the future-output space
as the graph of a finite-rank Hankel map.  Third, least-squares projection onto
that graph reduces the infinite-dimensional prediction problem to one
$n$-dimensional normal equation.  The five steps below give the details.

\emph{1. Stable realization and Gramians.}
Schur stability makes the controllability, observability, and cross-Gramian
series absolutely convergent. Shifting their summation indices gives
\[
\begin{aligned}
  P&=APA^\top+vv^\top,\\
  Q&=A_2^\top Q A_2+ww^\top,\\
  R&=ARA_2+vw^\top.
\end{aligned}
\]
Uniqueness therefore identifies these series with the solutions of
\eqref{eq:P}--\eqref{eq:R}. Controllability and observability give full row
rank. Hence the normalized reachability and observability maps
\[
  C_{\mathrm n}=P^{-1/2}\mathcal C_\infty,
  \qquad O_{\mathrm n}=Q^{-1/2}\mathcal O_\infty
\]
have orthonormal rows. The cosines of the principal angles are therefore the
singular values of
\[
  C_{\mathrm n}O_{\mathrm n}^\top=P^{-1/2}RQ^{-1/2}
\]
\citep{BjorckGolub1973}.  Squaring gives the spectrum of
$P^{-1/2}RQ^{-1}R^\top P^{-1/2}$, which is similar to
$P^{-1}RQ^{-1}R^\top$.  Theorem~\ref{thm:main} now gives the displayed
$\cos^2\theta_i$ identity.

\emph{2. Innovations and the past prediction space.}
For the process interpretation, take the covariance Hilbert space generated by
the normalized innovations $\{\varepsilon(k)\}_{k\in\mathbb Z}$.  Equivalently,
identify $\varepsilon(k)$ with the $k$th canonical basis vector of
$\mathcal H=\ell^2(\mathbb Z)$.  Under this covariance-preserving
identification, inner products are covariances and orthogonal projections are
optimal linear least-squares predictors.  The relevant closed linear prediction
spaces are
\[
  \begin{aligned}
    \mathcal E_-&:=\overline{\operatorname{span}}\{\varepsilon(k):k<0\},\\
    \mathcal Y_-&:=\overline{\operatorname{span}}\{y(k):k<0\},\\
    \mathcal Y_+&:=\overline{\operatorname{span}}\{y(k):k\ge0\}.
  \end{aligned}
\]
Here $\mathcal E_-$ is the past-innovation space, while $\mathcal Y_-$ and
$\mathcal Y_+$ are the past- and future-output spaces. The innovation-to-output
 transfer has impulse response
 \[
 \begin{aligned}
   y(k)&=\sum_{j\ge0}h(j)\varepsilon(k-j),\\
   h(0)&=1, & h(j+1)&=-w^\top A^j v.
 \end{aligned}
 \]
Schur stability of $A$ makes $h$ absolutely and square summable. The stable
 inverse innovation filter has coefficients
\[
 \eta(0)=1,\qquad \eta(j+1)=w^\top A_2^j v.
\]
Schur stability of $A_2$ gives the same summability for $\eta$. The rank-one
identity $A_2=A+vw^\top$, applied to successive powers, gives the finite
convolution formula
\[
   \sum_{i+j=m}\eta(i)h(j)=\begin{cases}1,&m=0,\\0,&m>0.\end{cases}
\]
Thus the causal filters with coefficients $h$ and $\eta$ are mutual inverses.
Causality and stability then give
\[
  \mathcal Y_-=\mathcal E_-.
\]
Indeed, each past output is generated by past innovations, and the stable inverse
recovers every past innovation from past outputs. Thus past outputs and past
innovations contain exactly the same information for linear prediction.

\emph{3. The future-output space and the Hankel map.}
The inverse filter separates each future output into its new innovation and a
finite-dimensional correction determined by the past state.  This correction
is the Hankel part of the realization.
For $k\ge0$ form the finite, unit-triangular whitening
\[
 y_{\mathrm w}(k):=\sum_{j=0}^{k}\eta(j)y(k-j).
\]
Because this transformation is causal and unit triangular, the closed span of
the $y_{\mathrm w}(k)$ is exactly the future-output space. Splitting the full
inverse convolution at time zero gives
\[
 \begin{aligned}
   y_{\mathrm w}(k)&=\varepsilon(k)
     -\sum_{i=1}^{n}\bigl((A_2^\top)^k w\bigr)_i p_i,\\
   p_i&:=\sum_{\ell\ge0}(A^\ell v)_i\varepsilon(-1-\ell).
 \end{aligned}
\]
Let $\mathcal E_+:=\mathcal E_-^\perp$ and
$q_i:=\sum_{k\ge0}((A_2^\top)^k w)_i\varepsilon(k)$.  The families
$p=(p_i)$ and $q=(q_i)$ lie in $\mathcal E_-$ and $\mathcal E_+$,
respectively, and have Gram matrices $P$ and $Q$. The realization therefore
determines the finite-rank Hankel map from future innovations to the
state-dependent component in the past prediction space:
\[
 H:\mathcal E_+\longrightarrow\mathcal E_-,\qquad
 Hf=\sum_{i=1}^{n}\langle q_i,f\rangle p_i.
\]
Under the orthogonal decomposition
$\mathcal H=\mathcal E_-\oplus\mathcal E_+$, the preceding formula for
$y_{\mathrm w}(k)$
says that the future-output prediction space is the graph of $-H$:
\[
  \mathcal Y_+=\{(-Hf,f):f\in\mathcal E_+\}.
\]

\emph{4. Reduction of the prediction problem to the state space.}
The graph representation is the decisive reduction: projection between two
infinite prediction spaces becomes an ordinary least-squares problem on the
finite state-generated subspace.
Canonical correlations between $\mathcal Y_-=\mathcal E_-$ and
$\mathcal Y_+$ are the cosines of their principal angles. Equivalently, their
squares are the eigenvalues of the past component of the orthogonal projection
onto $\mathcal Y_+$; this projection gives the optimal linear least-squares
predictor between the two spaces. Only the
state-generated part of the past,
$\mathcal X:=\operatorname{span}\{p_1,\ldots,p_n\}$ is active.  For
$u=\sum_i a_i p_i\in\mathcal X$, the least-squares projection of $(u,0)$
onto the graph of $-H$ has the finite-dimensional normal equation
$(I+PQ)d=-Pa$.  Its past component is consequently
\[
  \sum_i\bigl(Q(I+PQ)^{-1}Pa\bigr)_i p_i.
\]
The same projection has zero past component for
$u\in\mathcal E_-\cap\mathcal X^\perp$. Thus the infinite-dimensional
prediction problem reduces exactly to the $n$-dimensional controllable and
observable state contribution.

\emph{5. Canonical-correlation spectrum and ordering.}
The nonzero squared process correlations therefore have characteristic
polynomial
\[
 \chi_{Q(I+PQ)^{-1}P}
   =\chi_{I-(I+PQ)^{-1}},
\]
where the equality follows from product reversal and
$I-(I+PQ)^{-1}=(I+PQ)^{-1}PQ$.  Combining this complement with the already
proved $\cos^2\theta_i$ spectrum, and ordering both nonnegative families in
decreasing order, yields
$r_i=\sin\theta_{n+1-i}$ for $1\le i\le n$.  The vanishing on the inactive
orthogonal complement gives $r_{n+1}=r_{n+2}=\cdots=0$.
\end{proof}

\LyapunovCoreNotationTable

\begin{figure}[t]
\centering
\begin{tikzpicture}[>=stealth,line join=round,line cap=round,scale=1.0]
  \coordinate (O) at (0.35,0.45);
  \coordinate (C) at (4.25,1.55);   
  \coordinate (Op) at (2.85,2.85);  
  \draw[very thick,black] (O) -- (C)
        node[pos=1.0,right=1pt,black] {\footnotesize$\range(\mathcal C_\infty^\top)$};
  \draw[very thick,black,densely dashed] (O) -- (Op)
        node[pos=1.0,above right=-1pt,black] {\footnotesize$\range(\mathcal O_\infty^\top)$};
  \draw[->,thick] (O)++(18:1.05) arc (18:46:1.05);
  \node at ($(O)+(32:1.42)$) {\footnotesize$\theta_i$};
  \fill (O) circle (1.4pt);
  \node[align=center,font=\footnotesize] at (2.35,0.18)
    {$\{\cos^2\theta_i\}_{i=1}^{n}
       \;=\;\sigma\!\bigl(P^{-1}RQ^{-1}R^\top\bigr)
       \;=\;\sigma\!\bigl((I+PQ)^{-1}\bigr)$};
\end{tikzpicture}
\caption{In the Schur-stable controllable/observable case, the theorem says that
the squared cosines of the principal angles $\theta_i$ between the
reachability subspace $\range(\mathcal C_\infty^\top)$ (solid) and the
observability subspace $\range(\mathcal O_\infty^\top)$ (dashed) are the
eigenvalues of $(I+PQ)^{-1}$; the Gramian forms
are in the remark below.}
\label{fig:cc}
\end{figure}

\begin{remark}[Gram matrices in the Schur-stable case]\label{rem:gramians}
When $A,A_2$ are Schur-stable the series solutions of
\eqref{eq:P}--\eqref{eq:R} are Gramians: with the factors
$\mathcal C_\infty$ and $\mathcal O_\infty$ defined in
Proposition~\ref{prop:p91}, one has
$P=\mathcal C_\infty\mathcal C_\infty^\top$,
$Q=\mathcal O_\infty\mathcal O_\infty^\top$, and
$R=\mathcal C_\infty\mathcal O_\infty^\top$, so
\begin{multline*}
P^{-1}RQ^{-1}R^\top
=(\mathcal C_\infty\mathcal C_\infty^\top)^{-1}
 (\mathcal C_\infty\mathcal O_\infty^\top)\\
\mathrel{\phantom{=}}\cdot
 (\mathcal O_\infty\mathcal O_\infty^\top)^{-1}
 (\mathcal O_\infty\mathcal C_\infty^\top)
\end{multline*}
(\citealp[\S4.2]{Antoulas2005}), whose eigenvalues are the squared cosines
$\cos^2\theta_i$ of the principal angles between
$\range(\mathcal C_\infty^\top)$ and $\range(\mathcal O_\infty^\top)$
\citep{BjorckGolub1973,Moore1981} (Figure~\ref{fig:cc}).
Theorem~\ref{thm:main} equates these with $\sigma((I+PQ)^{-1})$, so
$\sigma(PQ)=\{\tan^2\theta_i\}_{i=1}^{n}$.
\end{remark}

\section{Domains, an open set, and change of state coordinates}\label{sec:setup}

The initial Lyapunov-kernel calculation needs only one nonempty
full-dimensional open set in the parameter space.  The stronger polynomial
statement proved later also uses numerator matrices obtained by clearing the
Lyapunov-operator denominators.  We therefore distinguish three domains. The
largest one, on which all three Lyapunov and Sylvester equations have unique
solutions, is
\[
 \mathcal D=\{(A,v,w):\delta_P\delta_Q\delta_R\ne0\}.
\]
The theorem is formulated on
\[
 \Omega=\{(A,v,w)\in\mathcal D:
          \det P\,\det Q\,\det(I+PQ)\ne0\},
\]
and the explicit similarity will be stated on the smaller domain
\[
 \Omega^\circ=\{(A,v,w)\in\Omega:
          \operatorname{res}(\chi_A,\chi_{A_2})\ne0\}.
\]
Thus \(\Omega\) is exactly the subset of \(\mathcal D\) on which the two
matrices in Theorem~\ref{thm:main} are defined, while \(\Omega^\circ\) also
requires \(A\) and \(A_2\) to have disjoint spectra.

\begin{lemma}[A nonempty open set and polynomial uniqueness]\label{lem:seed}%
There is a nonempty Euclidean-open set
\[
             \mathcal U\subset\Omega^\circ
\]
in the original space
\(\mathbb R^{n\times n}\times\mathbb R^n\times\mathbb R^n\).
At
every point of \(\mathcal U\), \(A\) has distinct nonzero eigenvalues and
\((A,v)\) is controllable.
If the coefficients of a polynomial in
\(\lambda\) are rational functions of the entries of \((A,v,w)\), with denominators
nonzero on \(\Omega\), and all those coefficients vanish on \(\mathcal U\),
then they vanish throughout \(\Omega\).
\end{lemma}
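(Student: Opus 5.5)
The proof has two parts: constructing \(\mathcal U\), and deducing polynomial uniqueness from \(\mathcal U\) being open and nonempty.

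\emph{Construction of \(\mathcal U\).} I would exhibit one explicit point of \(\Omega^\circ\) and then use openness. Take \(A_0=\diag(\alpha_1,\ldots,\alpha_n)\) with distinct real \(\alpha_i\in(0,1/2)\), \(v_0=\one\), and \(w_0=\varepsilon u\) for a small \(\varepsilon>0\) and a generic vector \(u\).

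\begin{enumerate}[label=(\roman*)]
\item \(A_0\) has distinct nonzero eigenvalues, and \((A_0,v_0)\) is controllable because the associated Vandermonde matrix is nonsingular.
\item For small \(\varepsilon\), \(A_2=A_0+\varepsilon v_0u^\top\) is Schur-stable with spectral radius below \(1/2\). Then every product of two eigenvalues drawn from \(\sigma(A)\cup\sigma(A_2)\) has modulus below \(1/4\), so \(\delta_P\delta_Q\delta_R\ne0\).
\item \(P\) is positive definite, since it is the controllability Gramian of a controllable pair.
\item \(Q\) is positive definite provided \((A_2,w_0)\) is observable. Observability of \((A_0+v_0w_0^\top,w_0)\) is equivalent to observability of \((A_0,w_0)\), because output feedback preserves observability. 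This holds when every entry of \(u\) is nonzero.
\item \(P\) and \(Q\) are positive definite, so \(PQ\) has positive eigenvalues and \(\det(I+PQ)>0\).
\item For disjoint spectra: eigenvalues of \(A_2\) that are not eigenvalues of \(A\) are the roots of \(1-w^\top(zI-A)^{-1}v=0\). Since \(w^\top(zI-A_0)^{-1}v_0=\sum_i \varepsilon u_i/(z-\alpha_i)\) has a simple pole at each \(\alpha_i\) with nonzero residue, no \(\alpha_i\) is a root of the secular equation. Hence \(\operatorname{res}(\chi_A,\chi_{A_2})\ne0\).
\end{enumerate}

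Every defining condition — \(\delta\)'s nonzero, \(\det P\), \(\det Q\), \(\det(I+PQ)\) and the resultant nonzero, distinct eigenvalues, controllability — is a nonvanishing or strict inequality of a continuous function of \((A,v,w)\). Continuity of \(P,Q,R\) follows from Cramer's rule on \(\mathcal D\). So all conditions persist on a neighbourhood of the base point, and I take that neighbourhood as \(\mathcal U\).

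\emph{Polynomial uniqueness.} Let \(c(A,v,w)=N/D\) be a coefficient, with \(N,D\) polynomials in the entries and \(D\ne0\) on \(\Omega\). On \(\mathcal U\), \(c=0\) forces \(N=0\). A real polynomial vanishing on a nonempty open subset of \(\mathbb R^{n^2+2n}\) is the zero polynomial, by induction on the number of variables or because all derivatives vanish at an interior point. So \(N\equiv0\), and at every point of \(\Omega\) we get \(c=N/D=0\).

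One subtlety is that the coefficients are only defined on \(\Omega\) as rational expressions. I read "rational function with denominator nonzero on \(\Omega\)" as a fixed representation \(N/D\), which is exactly what the argument needs.

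\emph{Main obstacle.} The only delicate step is ensuring \(Q\) is nonsingular and the spectra are disjoint simultaneously; the pole/residue argument in (iv) and (vi) handles both.
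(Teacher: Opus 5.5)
Your proposal is correct and takes essentially the paper's route. Both construct an explicit witness, a diagonal matrix with distinct eigenvalues in $(0,1/2)$ plus a small rank-one perturbation, keep all defining quantities nonzero on a neighbourhood of it, and apply the identity principle to the cleared numerators. The paper instead takes $v_0=w_0=\tau\mathbf 1$ and gets stability and disjoint spectra from strict interlacing, and observability from the symmetry of $A_{2,0}$, where you use continuity of eigenvalues and output-injection invariance.

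Two small fixes. First, state (vi) directly as $\chi_{A_2}(\alpha_k)=-\varepsilon u_k\prod_{j\ne k}(\alpha_k-\alpha_j)\ne0$, because the secular function is not defined at $\alpha_k$ itself. Second, add $\det A\ne0$ to your list of open conditions, so that the eigenvalues provably stay nonzero on $\mathcal U$.
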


\begin{proof}
Take a real diagonal Schur-stable matrix with distinct nonzero eigenvalues and
perturb it by a sufficiently small positive rank-one matrix.  The rank-one
determinant formula gives strict interlacing, while a Vandermonde determinant
gives controllability.  All factors defining \(\Omega^\circ\) are therefore
nonzero at this witness and remain nonzero on a surrounding open ball, which
is the required \(\mathcal U\).  Finally, after a common denominator has been
cleared, each coefficient in the last assertion is a real polynomial.  A real
polynomial that vanishes on the nonempty open set \(\mathcal U\) vanishes
identically.  Appendix~\ref{app:generic-locus} gives the witness and
nonvanishing calculations explicitly.  On the neighborhood there is no simple
spectrum or observability requirement for $A_2$ and no condition on
\(\det A_2\) or \(\det R\).
\end{proof}

Figure~\ref{fig:seed-zero-sets} shows the geometry of these exclusions in the
scalar case and locates the witness used to choose \(\mathcal U\).

\begin{figure}[t]
  \centering
  \includegraphics[width=0.95\linewidth]{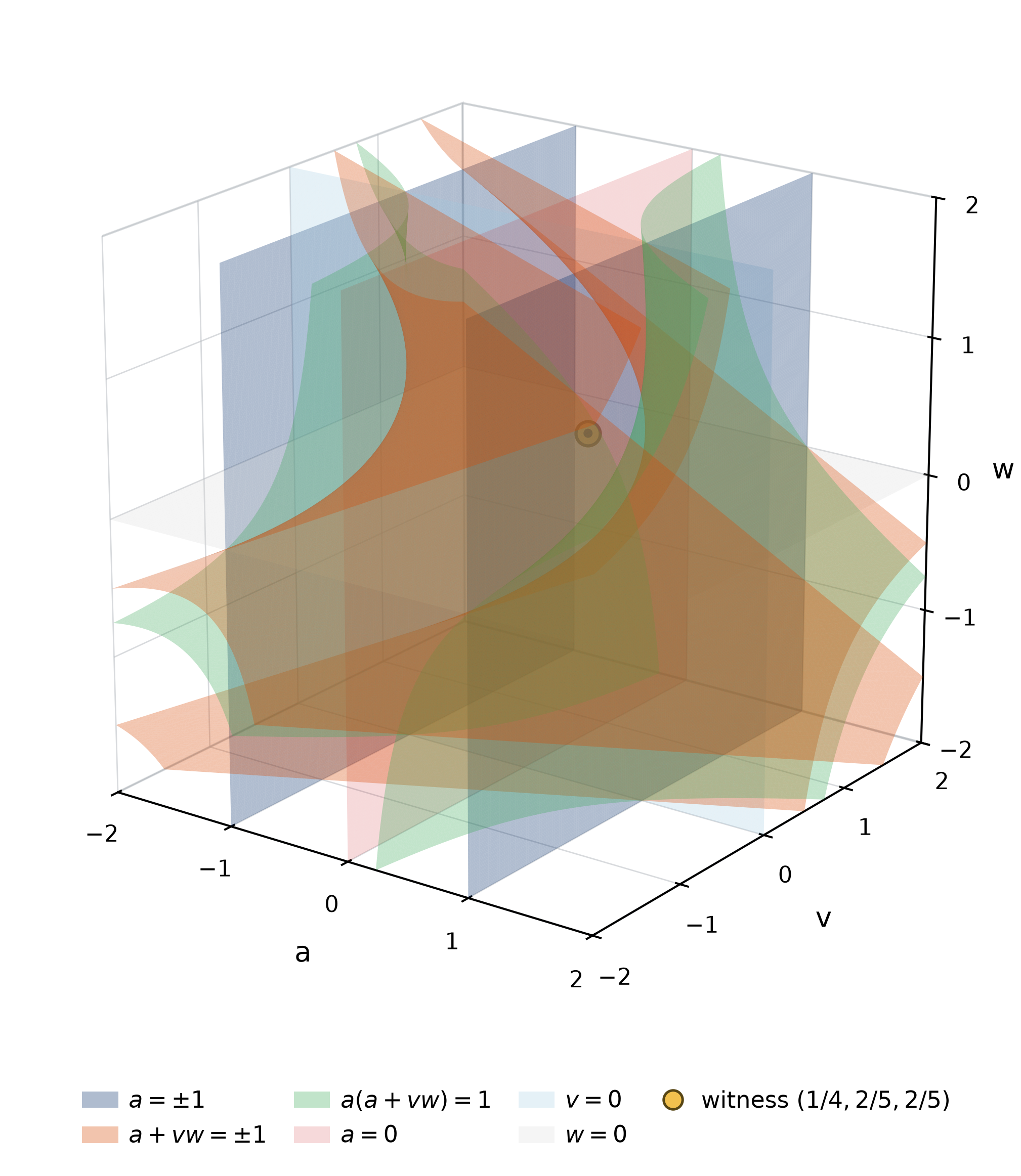}
  \caption{Scalar geometry of the open seed.  For \(n=1\), \(A=a\) and
  \(A_2=a+vw\); the displayed zero sets are
  \(a=\pm1\), \(a+vw=\pm1\), \(a(a+vw)=1\), \(a=0\), \(v=0\), and \(w=0\).
  The cleared numerator of \(\det(I+PQ)\) is \(\delta_R^2\), so it adds no
  further surface.  Any sufficiently small neighborhood of the marked witness
  \((\tfrac14,\tfrac25,\tfrac25)\) that avoids these sets is a valid
  \(\mathcal U\).  The surface \(A_2=0\) is not excluded because the proof does
  not require \(A_2\) to be nonsingular.}
  \label{fig:seed-zero-sets}
\end{figure}

\begin{figure}[t]
\centering
\ifarxivhtml
\includegraphics[
  width=0.95\linewidth,
  alt={Four nested domains: the nonempty open set U is contained in Omega interior, which is contained in Omega, which is contained in D.}
]{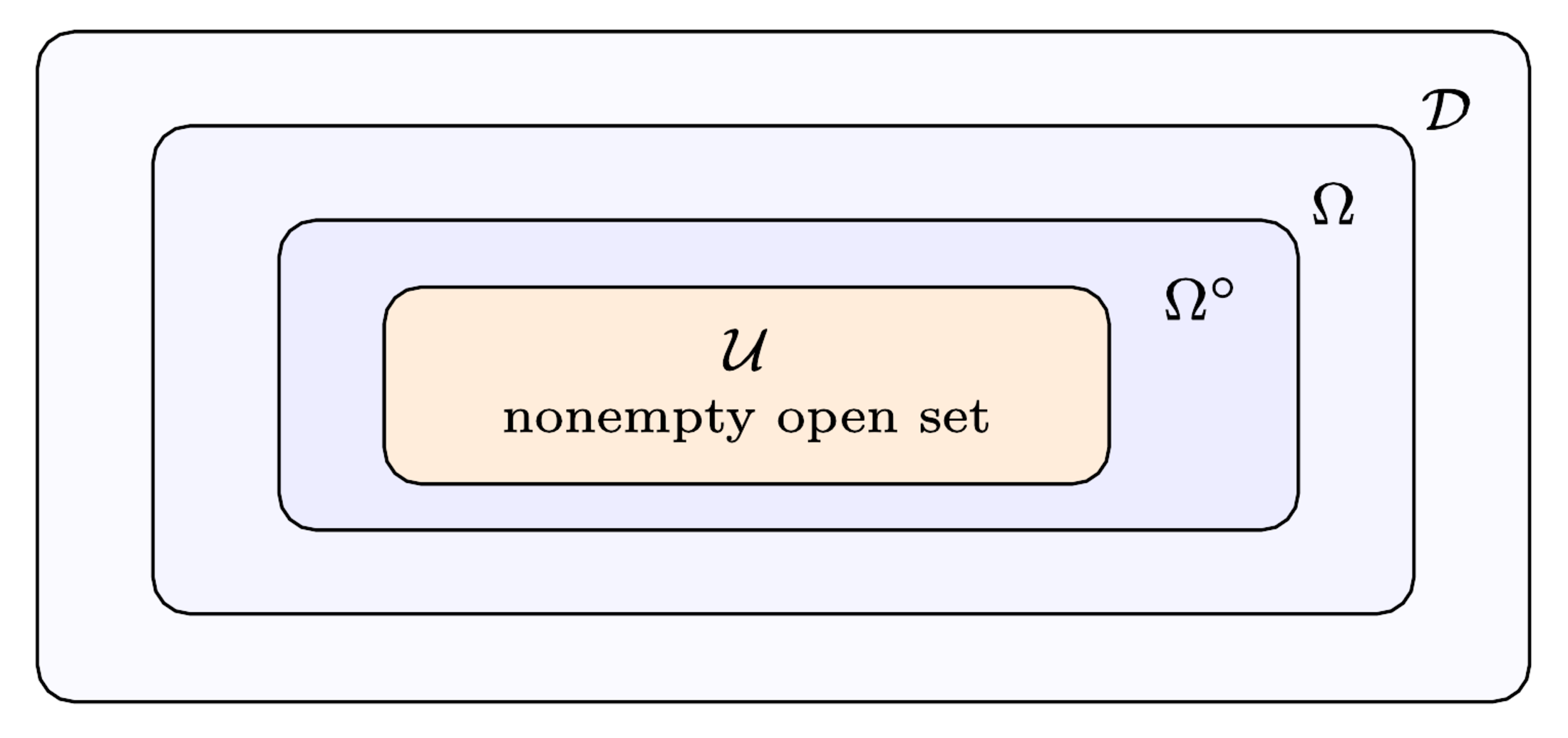}
\else
\begin{tikzpicture}[every node/.style={font=\footnotesize},line width=.45pt]
  \draw[rounded corners=5pt,fill=blue!2] (0,0) rectangle (7.1,3.2);
  \draw[rounded corners=5pt,fill=blue!4] (.55,.42) rectangle (6.55,2.75);
  \draw[rounded corners=5pt,fill=blue!7] (1.15,.82) rectangle (6.0,2.30);
  \draw[rounded corners=5pt,fill=orange!14] (1.65,1.04) rectangle (5.10,1.98);
  \node[anchor=north east] at (6.95,3.05) {$\mathcal D$};
  \node[anchor=north east] at (6.4,2.60) {$\Omega$};
  \node[anchor=north east] at (5.85,2.15) {$\Omega^\circ$};
  \node[align=center] at (3.375,1.51)
    {$\mathcal U$\\[-1pt]{\scriptsize nonempty open set}};
\end{tikzpicture}
\fi
\caption{The four domains have different jobs:
\(\mathcal U\subset\Omega^\circ\subset\Omega\subset\mathcal D\).
The divided kernel calculation starts on \(\mathcal U\), the theorem extends
to \(\Omega\), the explicit similarity extends to \(\Omega^\circ\), and the
denominator-cleared polynomial identity extends to \(\mathcal D\).}
\label{fig:domains}
\end{figure}

\begin{lemma}[Change to modal coordinates preserves the two spectra]\label{lem:transport}%
Let $(A, v, w) \in \mathcal U$ and let
$T \in \mathbb C^{n\times n}$ be invertible with
$T^{-1} A T = \diag(\alpha)$. Set $\widetilde v := T^{-1} v$,
$\widetilde w := T^\top w$, $\widetilde P := T^{-1} P T^{-\top}$,
$\widetilde Q := T^\top Q T$, $\widetilde R := T^{-1} R T$.
Then
$(\widetilde P, \widetilde Q, \widetilde R)$ solves the Lyapunov
equations~\eqref{eq:P}--\eqref{eq:R} with data $(\diag(\alpha),
\widetilde v, \widetilde w)$.
\[
  \sigma(P^{-1}RQ^{-1}R^\top)
  =\sigma(\widetilde P^{-1}\widetilde R\widetilde Q^{-1}\widetilde R^\top).
\]
\[
  \sigma((I+PQ)^{-1})=\sigma((I+\widetilde P\widetilde Q)^{-1}).
\]
\end{lemma}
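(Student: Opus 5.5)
The plan is a direct substitution followed by two similarity computations.

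\emph{Transformed data.} Since \((A,v,w)\in\mathcal U\), Lemma~\ref{lem:seed} says \(A\) has distinct eigenvalues, so a (possibly complex) diagonalizing \(T\) exists. The transformed data satisfy
\[
\widetilde A:=T^{-1}AT=\diag(\alpha),\qquad \widetilde A_2:=\widetilde A+\widetilde v\widetilde w^\top=T^{-1}A_2T,
\]
because \(\widetilde v\widetilde w^\top=T^{-1}vw^\top T\).

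\emph{Equation \eqref{eq:P}.} Multiply \(P=APA^\top+vv^\top\) on the left by \(T^{-1}\) and on the right by \(T^{-\top}\). Inserting \(TT^{-1}\) and \(T^{-\top}T^\top\) gives
\[
\widetilde P=\widetilde A\widetilde P\widetilde A^\top+\widetilde v\widetilde v^\top .
\]

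\emph{Equation \eqref{eq:Q}.} Multiply by \(T^\top\) on the left and \(T\) on the right. This uses \(T^\top A_2^\top T^{-\top}=\widetilde A_2^\top\), which is the transpose of \(T^{-1}A_2T\).

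\emph{Equation \eqref{eq:R}.} Multiply by \(T^{-1}\) on the left and \(T\) on the right. This yields
\[
\widetilde R=\widetilde A\widetilde R\widetilde A_2+\widetilde v\widetilde w^\top .
\]

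\emph{Uniqueness.} The similarity preserves the spectra of \(A\) and \(A_2\), so \(\delta_P,\delta_Q,\delta_R\) are unchanged. Hence these transformed matrices are \emph{the} unique solutions for the new data. Note that the equations are stated with the algebraic transpose, so complex \(T\) causes no conjugation issues. Nonsingularity of \(P\), \(Q\) and \(I+PQ\) transfers as well; for the last factor this follows from the identity in the next step.

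\emph{Spectra.} The first target product transforms as
\[
\widetilde P^{-1}\widetilde R\widetilde Q^{-1}\widetilde R^\top
=T^\top P^{-1}T\,T^{-1}RT\,T^{-1}Q^{-1}T^{-\top}\,T^\top R^\top T^{-\top}
=T^\top\bigl(P^{-1}RQ^{-1}R^\top\bigr)T^{-\top}.
\]
This is a similarity by \(T^{-\top}\), so the eigenvalue multisets agree. The second target transforms as
\[
\widetilde P\widetilde Q=T^{-1}PQT,\qquad I+\widetilde P\widetilde Q=T^{-1}(I+PQ)T,
\]
so its inverse is similar to \((I+PQ)^{-1}\).

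The only point needing care is bookkeeping of transposes versus inverses; in particular, \(\widetilde Q\) and \(\widetilde P\) transform contragrediently, and that is exactly what makes \(\widetilde P\widetilde Q\) a similarity. No step presents a real obstacle.
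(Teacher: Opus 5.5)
Your proposal is correct and follows the paper's proof essentially line for line. You transform each of \eqref{eq:P}--\eqref{eq:R} by direct substitution, obtain $\widetilde P^{-1}\widetilde R\widetilde Q^{-1}\widetilde R^\top=T^\top(P^{-1}RQ^{-1}R^\top)T^{-\top}$, and use $\widetilde P\widetilde Q=T^{-1}PQT$; your additional remarks on uniqueness (unchanged $\delta_P,\delta_Q,\delta_R$) and on the transfer of nonsingularity are correct points that the paper leaves implicit.
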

\begin{proof}
The Lyapunov equations transform under this change by direct substitution: e.g.\
$P = APA^\top + vv^\top$ multiplied left by $T^{-1}$ and right by
$T^{-\top}$ becomes $\widetilde P=(T^{-1}AT)\widetilde P(T^{-1}AT)^\top+
\widetilde v\widetilde v^\top$, using $T^{-\top}T^\top=I$ and
$T^{-1}v\cdot(T^{-1}v)^\top=\widetilde v\widetilde v^\top$. The other two
Lyapunov equations transform in the same way. For the spectra,
$\widetilde P^{-1}\widetilde R\widetilde Q^{-1}\widetilde R^\top
= T^\top(P^{-1}RQ^{-1}R^\top)T^{-\top}$ by direct multiplication
(four cancellations $T^{-\top}T^\top = I$, $TT^{-1} = I$), a
similarity which preserves the spectrum; the second identity
follows from $\widetilde P\widetilde Q=T^{-1}(PQ)T$.
\end{proof}

\emph{Use of complex modal coordinates.} Although the original matrices are
real, the matrix that diagonalizes \(A\) may be complex. We therefore carry
out the modal calculation over \(\mathbb C\), always using the algebraic
(non-conjugating) transpose. After denominators have been cleared, the
resulting identities are polynomial identities with real coefficients.
Restricting them back to real values of \((A,v,w)\) gives the claimed real
statement.

At a fixed point of \(\mathcal U\), Lemma~\ref{lem:transport} allows us to
work in modal coordinates, in which
\[
                     A=\diag(\alpha_1,\ldots,\alpha_n).
\]
Membership in \(\mathcal U\) gives precisely the nonvanishing needed below:
\(\alpha_i\ne0\), \(v_i\ne0\),
\(1-\alpha_i\alpha_j\ne0\), \(1-\alpha_i\beta_l\ne0\),
\(\alpha_i-\beta_l\ne0\), and nonsingular \(P,Q,I+PQ\).
Thus every resolvent and every value \(b(\alpha_i)\) used in the two kernel
evaluations is well defined, while no inverse of \(R\) is required.  The
corresponding denominator inventory is given in
Appendix~\ref{app:generic-locus}.

\section{The Lyapunov-kernel step}\label{sec:kernel}

Both evaluations used below follow from one determinant identity for the
Lyapunov equation itself. Throughout this section all transposes are
algebraic, non-conjugating transposes.

\begin{lemma}[Lyapunov-kernel factorization]\label{lem:factor}%
Let \(S,Q\in\mathbb C^{n\times n}\) and \(w\in\mathbb C^n\) satisfy
\begin{equation}\label{eq:kernel-lyapunov}
 Q=S^\top QS+ww^\top.
\end{equation}
For all scalars \(z,\zeta\),
\begin{multline}\label{eq:factor}
 (I-\zeta S^\top)Q(I-zS)-(1-z\zeta)ww^\top\\
 =(S^\top-zI)Q(S-\zeta I).
\end{multline}
\end{lemma}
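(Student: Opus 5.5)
The plan is to expand both sides of \eqref{eq:factor} as polynomials in the scalars \(z,\zeta\) and to compare them term by term. The only input is the Lyapunov relation \eqref{eq:kernel-lyapunov}, used once to eliminate \(ww^\top\).

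First, expand the left-hand side. Since \(z,\zeta\) are scalars they commute with all matrices, so
\[
 (I-\zeta S^\top)Q(I-zS)=Q-zQS-\zeta S^\top Q+z\zeta\,S^\top QS .
\]
Subtracting \((1-z\zeta)ww^\top\) gives
\[
 Q-ww^\top-zQS-\zeta S^\top Q+z\zeta\,(S^\top QS+ww^\top).
\]
Now apply \eqref{eq:kernel-lyapunov} in its two equivalent forms. The relation \(Q-ww^\top=S^\top QS\) handles the constant term, and \(S^\top QS+ww^\top=Q\) handles the \(z\zeta\) term. The left-hand side therefore becomes
\[
 S^\top QS-zQS-\zeta S^\top Q+z\zeta\,Q .
\]

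Next, expand the right-hand side directly:
\[
 (S^\top-zI)Q(S-\zeta I)=S^\top QS-\zeta S^\top Q-zQS+z\zeta\,Q .
\]
This agrees term by term with the reduced left-hand side, which proves the identity for all \(z,\zeta\).

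Some remarks on the scope of the argument. No invertibility, symmetry, or stability is used. Only the algebraic transpose appears, so the computation is valid over \(\mathbb C\) exactly as stated. The single point needing care is the bookkeeping of the \(z\zeta\) coefficient. There the \(ww^\top\) term enters with a plus sign, coming from \(-(1-z\zeta)ww^\top\), and must be combined with \(S^\top QS\) to reproduce \(Q\). I expect no genuine obstacle beyond this.
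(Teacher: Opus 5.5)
Your proof is correct and follows essentially the paper's own argument. You eliminate \(ww^\top\) using the Lyapunov relation, expand both sides, and observe that each equals \(S^\top QS-\zeta S^\top Q-zQS+z\zeta Q\), with no inverses or other hypotheses needed.
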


\begin{proof}
Substitute \(ww^\top=Q-S^\top QS\) from
\eqref{eq:kernel-lyapunov} and expand. Both sides are
\[
 S^\top QS-\zeta S^\top Q-zQS+z\zeta Q.
\]
No inverse, division, stability, or definiteness is involved.
\end{proof}

\begin{lemma}[Lyapunov-kernel identity]\label{lem:kernel}%
Assume in addition that \(Q\), \(I-zS\), and \(I-\zeta S\) are nonsingular and
\(1-z\zeta\ne0\).
Define
\[
 g(z)=(I-zS^\top)^{-1}w,\qquad
 b(z)=\frac{\det(zI-S)}{\det(I-zS)}.
\]
Then
\begin{equation}\label{eq:kernel}
 g(z)^\top Q^{-1}g(\zeta)=\frac{1-b(z)b(\zeta)}{1-z\zeta}.
\end{equation}
\end{lemma}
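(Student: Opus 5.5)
The plan is to take determinants of both sides of the factorization \eqref{eq:factor} from Lemma~\ref{lem:factor} and read off the scalar kernel with the rank-one determinant formula, $\det(X-xy^\top)=\det X\,(1-y^\top X^{-1}x)$ for nonsingular $X$. No further structure of $S$ is needed; everything follows from \eqref{eq:factor} plus the nonsingularity hypotheses of the statement.

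\emph{Step 1 (left side).} Set $X:=(I-\zeta S^\top)Q(I-zS)$. This matrix is nonsingular because $Q$, $I-zS$ and $I-\zeta S$ are nonsingular, and $\det(I-\zeta S^\top)=\det(I-\zeta S)$. Its inverse is
\[
 X^{-1}=(I-zS)^{-1}Q^{-1}(I-\zeta S^\top)^{-1}.
\]
The left side of \eqref{eq:factor} is $X-(1-z\zeta)ww^\top$, so the determinant lemma gives
\[
 \det X\,\bigl(1-(1-z\zeta)\,w^\top X^{-1}w\bigr).
\]
Using the algebraic transpose, $w^\top(I-zS)^{-1}=\bigl((I-zS^\top)^{-1}w\bigr)^\top=g(z)^\top$ and $(I-\zeta S^\top)^{-1}w=g(\zeta)$. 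Hence $w^\top X^{-1}w=g(z)^\top Q^{-1}g(\zeta)$, and the determinant of the left side equals
\[
 \det(I-\zeta S)\,\det Q\,\det(I-zS)\,\bigl(1-(1-z\zeta)\,g(z)^\top Q^{-1}g(\zeta)\bigr).
\]

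\emph{Step 2 (right side).} The determinant of $(S^\top-zI)Q(S-\zeta I)$ is $\det(S-zI)\det Q\det(S-\zeta I)$. The two sign factors $(-1)^n$ cancel, so this equals
\[
 \det(zI-S)\,\det Q\,\det(\zeta I-S).
\]

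\emph{Step 3 (conclusion).} Equate the two determinants and divide by the nonzero quantity $\det Q\,\det(I-zS)\det(I-\zeta S)$. This yields
\[
 1-(1-z\zeta)\,g(z)^\top Q^{-1}g(\zeta)=b(z)\,b(\zeta).
\]
Dividing by $1-z\zeta\neq0$ gives \eqref{eq:kernel}.

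The only place requiring care is the transpose bookkeeping in Step~1. One must check that the row vector produced by the determinant lemma is exactly $g(z)^\top$, and not a vector built from $S$ in place of $S^\top$; this is why the non-conjugating transpose convention matters over $\mathbb C$. The remaining steps are routine.
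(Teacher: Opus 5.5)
Your proof is correct and follows the paper's argument essentially step for step. Like the paper, you set $X=(I-\zeta S^\top)Q(I-zS)$, identify $w^\top X^{-1}w=g(z)^\top Q^{-1}g(\zeta)$, apply the matrix determinant lemma to the left side of \eqref{eq:factor}, and on the right cancel the two $(-1)^n$ factors and use $\det(I-\zeta S^\top)=\det(I-\zeta S)$; the only cosmetic difference is that you equate determinants and then divide, whereas the paper writes the quotient $\det\bigl(X-(1-z\zeta)ww^\top\bigr)/\det X$ directly.
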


\begin{proof}
Put
\[
 M=(I-\zeta S^\top)Q(I-zS).
\]
Then \(g(z)^\top Q^{-1}g(\zeta)=w^\top M^{-1}w\). The matrix determinant lemma
and Lemma~\ref{lem:factor} give
\begin{multline*}
 1-(1-z\zeta)g(z)^\top Q^{-1}g(\zeta)\\
 =\frac{\det(M-(1-z\zeta)ww^\top)}{\det M}.
\end{multline*}
Using \eqref{eq:factor}, the quotient on the right becomes
\[
 \frac{\det(S^\top-zI)}{\det(I-zS)}
 \frac{\det(S-\zeta I)}{\det(I-\zeta S^\top)}=b(z)b(\zeta).
\]
The two factors \((-1)^n\) cancel, and
\(\det(I-\zeta S^\top)=\det(I-\zeta S)\). Rearrangement proves
\eqref{eq:kernel}.
\end{proof}

\begin{lemma}[Reciprocity]\label{lem:reciprocal}%
For \(z\ne0\), wherever both quotients are defined, the determinant quotient
in Lemma~\ref{lem:kernel} satisfies
\begin{equation}\label{eq:reciprocal}
 b(z)b(1/z)=1.
\end{equation}
\end{lemma}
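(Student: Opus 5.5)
The identity is a direct computation with determinants; I would not invoke the kernel identity at all. Write $b(z)=\det(zI-S)/\det(I-zS)$, which is defined wherever $\det(I-zS)\neq0$. For $z\neq0$ I substitute $1/z$ into both the numerator and the denominator and pull the scalar factor $1/z$ out of each $n\times n$ determinant:
\[
 \det(z^{-1}I-S)=z^{-n}\det(I-zS),\qquad
 \det(I-z^{-1}S)=z^{-n}\det(zI-S).
\]

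Dividing these two expressions, the powers $z^{-n}$ cancel, so
\[
 b(1/z)=\frac{\det(I-zS)}{\det(zI-S)}.
\]
This is exactly the reciprocal of $b(z)$, and multiplying gives \eqref{eq:reciprocal}. No transpose convention enters, since only $S$ itself appears.

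The only point that needs care is the phrase ``wherever both quotients are defined.'' That hypothesis says $\det(I-zS)\neq0$ and $\det(I-z^{-1}S)\neq0$. By the second displayed relation, the latter is equivalent to $\det(zI-S)\neq0$. So both the numerator and the denominator of $b(z)$ are nonzero, and taking the reciprocal is legitimate.

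This is also the setting in which the identity is used later. At $z=\alpha_i$, the modal-coordinate assumptions $\alpha_i\neq\beta_l$ and $1-\alpha_i\beta_l\neq0$ recorded after Lemma~\ref{lem:transport} make both $b(\alpha_i)$ and $b(1/\alpha_i)$ finite and nonzero. I do not expect any real obstacle; the bookkeeping of the $z^{-n}$ factors is the whole argument.
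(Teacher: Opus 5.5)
Your proposal is correct and is essentially the paper's proof: both pull the factor $z^{-1}$ out of the two $n\times n$ determinants, $\det(z^{-1}I-S)=z^{-n}\det(I-zS)$ and $\det(I-z^{-1}S)=z^{-n}\det(zI-S)$, and then cancel the two quotients. Your added check, that the definedness hypothesis forces $\det(zI-S)\neq0$ so that $b(1/z)$ really is the reciprocal of $b(z)$, is a useful detail the paper leaves implicit.
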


\begin{proof}
Indeed,
\begin{align*}
 \det(z^{-1}I-S)&=z^{-n}\det(I-zS),\\
 \det(I-z^{-1}S)&=z^{-n}\det(zI-S),
\end{align*}
and the two determinant quotients therefore cancel.
\end{proof}

The factorization \eqref{eq:factor} contains no inverses. The additional
hypotheses in Lemma~\ref{lem:kernel} are needed only to divide by the displayed
matrices and scalars. This distinction will allow the final polynomial
identity to remain valid on \(\mathcal D\), including at singular points.

\section{Two evaluations, product reversal, and the theorem}\label{sec:sampling}

Apply Lemma~\ref{lem:kernel} with \(S=A_2\) and write
\[
 b_i=b(\alpha_i),\qquad B=\diag(b_1,\ldots,b_n).
\]
The denominator of \(b_i\) is nonzero by \(\delta_R\ne0\); its numerator is
nonzero because \(\sigma(A)\cap\sigma(A_2)=\varnothing\) on
\(\mathcal U\). Thus \(B\) is invertible.

Let \(e_i\) denote the \(i\)th standard coordinate vector. Since \(A\) is diagonal,
\eqref{eq:P} and \eqref{eq:R} give
\begin{align}
 P_{ij}&=\frac{v_iv_j}{1-\alpha_i\alpha_j}, \notag\\
 e_i^\top R&=v_iw^\top(I-\alpha_iA_2)^{-1}
          =v_i g(\alpha_i)^\top. \label{eq:modal-rows}
\end{align}
Evaluating \eqref{eq:kernel} at
\(z=\alpha_i,\zeta=\alpha_j\) therefore yields
\[
 (RQ^{-1}R^\top)_{ij}
 =\frac{v_iv_j(1-b_ib_j)}{1-\alpha_i\alpha_j}
 =P_{ij}-(BPB)_{ij},
\]
and hence
\begin{equation}\label{eq:direct}
 RQ^{-1}R^\top=P-BPB.
\end{equation}

For the evaluation at the reciprocal eigenvalue, the rank-one update alone gives
\[
 (\alpha_iI-A_2^\top)e_i=-v_iw,\qquad
 g(1/\alpha_i)=-\frac{\alpha_i}{v_i}e_i.
\]
Using Lemma~\ref{lem:reciprocal} in \eqref{eq:kernel} at
\(z=1/\alpha_i,\zeta=1/\alpha_j\) gives
\[
 b_ib_j(Q^{-1})_{ij}
 =\frac{v_iv_j(1-b_ib_j)}{1-\alpha_i\alpha_j}.
\]
Comparison with \eqref{eq:modal-rows} proves
\begin{equation}\label{eq:recipsample}
 BQ^{-1}B=P-BPB.
\end{equation}
Thus the two evaluations of one scalar kernel produce
\begin{equation}\label{eq:common-sample}
 \boxed{RQ^{-1}R^\top=BQ^{-1}B=P-BPB.}
\end{equation}

From the right equality in \eqref{eq:common-sample},
\[
 Q^{-1}=B^{-1}PB^{-1}-P,
\]
so
\begin{equation}\label{eq:target-products}
 (I+PQ)^{-1}=Q^{-1}BP^{-1}B.
\end{equation}
The left equality gives
\[
 P^{-1}RQ^{-1}R^\top=P^{-1}BQ^{-1}B.
\]
With \(M_1=P^{-1}B\) and \(M_2=Q^{-1}B\), the two target matrices are
\(M_1M_2\) and \(M_2M_1\). Therefore
\begin{equation}\label{eq:single}
 \det(\lambda I-P^{-1}RQ^{-1}R^\top)
 =\det(\lambda I-(I+PQ)^{-1})
\end{equation}
on \(\mathcal U\), by the general identity
\(\det(\lambda I-M_1M_2)=\det(\lambda I-M_2M_1)\), the classical
\(AB\)--\(BA\) principle \citep{Flanders1951}. Since \(M_1\) is invertible on
\(\mathcal U\), the two matrices are in fact similar there through
\[
 M_1^{-1}=B^{-1}P.
\]

\subsection{Proof on the natural domain}\label{sec:proof}

\begin{proof}[Proof of Theorem~\ref{thm:main}]
Equation~\eqref{eq:single} proves equality of the two characteristic
polynomials on the nonempty open set \(\mathcal U\).  On \(\Omega\), the
entries of \(P,Q,R\) and of every displayed inverse are rational functions of
the entries of \((A,v,w)\), with denominators nonzero there. Hence every
coefficient of the difference of the two characteristic polynomials is a
rational function satisfying the hypotheses of Lemma~\ref{lem:seed}.  It
vanishes throughout \(\Omega\).  Since \(\Omega\) is exactly the part of
\(\mathcal D\) where \(P,Q,I+PQ\) are nonsingular, this is the conclusion of
Theorem~\ref{thm:main}.
\end{proof}

The inverse-free continuation in
Appendix~\ref{app:polynomial-continuation} strengthens this proof: after all
denominators have been cleared, the resulting polynomial identity holds on
the whole domain \(\mathcal D\), including singular data.  It is not needed
for the preceding proof of Theorem~\ref{thm:main} on \(\Omega\).

\begin{corollary}[Determinant identity]\label{cor:det}%
For $(A,v,w)\in\mathcal D$ --- no two eigenvalues among
$\sigma(A)\cup\sigma(A_2)$ with product $1$, and no nonsingularity
or stability hypothesis ---
$\det(R)^2\,\det(I+PQ) = \det(P)\det(Q)$.
\end{corollary}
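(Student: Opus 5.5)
Our plan is to prove the determinant identity first on the seed \(\mathcal U\) from the boxed relation \eqref{eq:common-sample}, and then extend it to all of \(\mathcal D\) by polynomial continuation.

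\emph{Identity on \(\mathcal U\).} Work in modal coordinates via Lemma~\ref{lem:transport}. The change of coordinates multiplies \(\det P\) by \(\det T^{-2}\), \(\det Q\) by \(\det T^{2}\), and \(\det R^2\) by \(1\) (because \(\widetilde R=T^{-1}RT\)); it leaves \(\det(I+PQ)\) unchanged. Hence both sides of the claimed identity are coordinate-invariant. Taking determinants in \(RQ^{-1}R^\top=BQ^{-1}B\) gives \(\det(R)^2=\det(B)^2\). From \eqref{eq:target-products}, \(\det(I+PQ)^{-1}=\det(B)^2/(\det P\det Q)\). Combining the two yields \(\det(R)^2\det(I+PQ)=\det P\det Q\) on \(\mathcal U\). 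Note that \(\det B\ne0\) there, so no division problem arises.

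\emph{Continuation to \(\mathcal D\).} By Cramer's rule applied to the vectorized equations, \(\delta_P P\), \(\delta_Q Q\) and \(\delta_R R\) have entries that are polynomials in \((A,v,w)\). Multiply the identity by \(\delta_R^{2n}\delta_P^{n}\delta_Q^{n}\). Since \(I+PQ\) contributes at most degree \(n\) in each of \(P,Q\), the identity becomes
\[
\det(\delta_R R)^2\,\det(\delta_P\delta_Q I+(\delta_P P)(\delta_Q Q))
=\delta_R^{2n}\det(\delta_P P)\det(\delta_Q Q).
\]
Both sides of this cleared identity are polynomials in the entries of \((A,v,w)\). They agree on the nonempty open set \(\mathcal U\), so they agree identically, exactly as in the last clause of Lemma~\ref{lem:seed}. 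On \(\mathcal D\) we have \(\delta_P\delta_Q\delta_R\ne0\), so we may divide back by \(\delta_R^{2n}\delta_P^n\delta_Q^n\). This recovers the original identity at every point of \(\mathcal D\), with no nonsingularity assumption on \(P\), \(Q\), \(R\) or \(I+PQ\).

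\emph{Main obstacle.} The delicate point is that the polynomial continuation must not pass through any inverse. The derivation on \(\mathcal U\) uses \(Q^{-1}\), \(P^{-1}\) and \(B^{-1}\), but the final identity contains none of these. We therefore have to check that the cleared form is a genuine polynomial identity. This requires confirming that the entries of \(P,Q,R\) are rational with denominators exactly \(\delta_P,\delta_Q,\delta_R\), which holds because the Kronecker-form operators \(I-A\otimes A\), \(I-A_2^\top\otimes A_2^\top\) and \(I-A_2^\top\otimes A\) have these determinants. It also requires bookkeeping of the scaling degrees. We also note that real-versus-complex coordinates cause no issue: the cleared identity has real coefficients and is verified on a real open set.
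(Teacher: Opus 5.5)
Your proposal is correct and matches the paper's proof. In both, the identity is first established on $\mathcal U$ and then multiplied by $\delta_R^{2n}\delta_P^n\delta_Q^n$, giving the polynomial $\det(\widehat R)^2\det(\delta_P\delta_Q I+\widehat P\widehat Q)-\delta_R^{2n}\det\widehat P\det\widehat Q$; this vanishes identically by Lemma~\ref{lem:seed} and is divided back on $\mathcal D$. The only difference is cosmetic and lies in the step on $\mathcal U$: the paper sets $\lambda=0$ in \eqref{eq:single}, whereas you take determinants of \eqref{eq:common-sample} and \eqref{eq:target-products} in modal coordinates and check the coordinate invariance yourself, which is the same computation.
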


\begin{proof}
On \(\mathcal U\), set \(\lambda=0\) in \eqref{eq:single} and take
determinants.  Since the matrices there are nonsingular, this gives
\[
 \det(R)^2\det(I+PQ)=\det(P)\det(Q).
\]
Now put
\[
 \widehat P=\delta_P P,\quad
 \widehat Q=\delta_Q Q,\quad
 \widehat R=\delta_R R,\quad
 \widehat N=\delta_P\delta_Q I+\widehat P\widehat Q.
\]
By Cramer's rule these numerator matrices have polynomial entries; see
Appendix~\ref{app:clearing-details}.  Clearing the denominators in the last
determinant equality therefore shows that the polynomial
\[
 \Delta:=\det(\widehat R)^2\det(\widehat N)
   -\delta_R^{2n}\det(\widehat P)\det(\widehat Q)
\]
vanishes on \(\mathcal U\).  Lemma~\ref{lem:seed} gives
\(\Delta\equiv0\) on the entire parameter space.  Finally, on \(\mathcal D\)
the three \(\delta\)'s are nonzero.  Substituting the displayed definitions
into \(\Delta=0\) and cancelling their powers proves the stated identity,
without assuming that \(P,Q,R\), or \(I+PQ\) is nonsingular.
\end{proof}

\section{An explicit similarity in the original state coordinates}\label{sec:intertwiner}

In realization theory, system similarity is the invertible state-coordinate
relation connecting two minimal realizations of the same transfer function
\citep{ForsterNagy2004}. The result below instead constructs an explicit
ordinary matrix similarity between the two spectral matrices in
Theorem~\ref{thm:main}; the citation provides systems-theoretic background,
not a proof of the Lyapunov identity.

Let
\[
 \chi_{A_2}(z)=\det(zI-A_2),\qquad
 \chi_{A_2}^*(z)=\det(I-zA_2).
\]
Evaluating these two polynomials at \(A\), set
\begin{equation}\label{eq:bA}
 b(A)=\chi_{A_2}(A)\,\chi_{A_2}^*(A)^{-1}.
\end{equation}
The second factor is invertible on \(\mathcal D\), since
\[
 \det\chi_{A_2}^*(A)=\prod_{i,l}(1-\alpha_i\beta_l)=\delta_R.
\]
The first is invertible precisely when
\(\operatorname{res}(\chi_A,\chi_{A_2})\ne0\).

\begin{proposition}[Extended Lyapunov-kernel similarity]\label{prop:intertwiner}%
On \(\Omega^\circ\), the matrix
\begin{equation}\label{eq:generic-similarity}
 Z=b(A)^{-1}P
\end{equation}
is invertible and satisfies
\[
 Z\bigl(P^{-1}RQ^{-1}R^\top\bigr)Z^{-1}=(I+PQ)^{-1}.
\]
\end{proposition}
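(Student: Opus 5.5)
The plan is to verify the intertwining relation on the open seed \(\mathcal U\) in modal coordinates, where it reduces to the similarity \(M_1^{-1}=B^{-1}P\) already found after \eqref{eq:single}, and then to extend it to \(\Omega^\circ\) by polynomial continuation. Write \(X:=P^{-1}RQ^{-1}R^\top\) and \(Y:=(I+PQ)^{-1}\). Since \(Z^{-1}\) should not appear in a continuation argument, I will prove the inverse-free form
\[
 ZX=YZ,\qquad\text{i.e.}\qquad b(A)^{-1}RQ^{-1}R^\top=(I+PQ)^{-1}b(A)^{-1}P .
\]
Invertibility of \(Z\) on \(\Omega^\circ\) is immediate from \(\det Z=\det P/\det b(A)\). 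Here \(\det P\ne0\) because \(\Omega^\circ\subset\Omega\). Also \(\det\chi_{A_2}(A)=\pm\operatorname{res}(\chi_A,\chi_{A_2})\ne0\) and \(\det\chi^*_{A_2}(A)=\delta_R\ne0\), so \(b(A)\) is invertible. Once \(ZX=YZ\) holds, the stated similarity follows on multiplying on the right by \(Z^{-1}\).

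\emph{Step 1 (modal coordinates on \(\mathcal U\)).} Fix a point of \(\mathcal U\) and take \(T\) with \(T^{-1}AT=\diag(\alpha)\), as in Lemma~\ref{lem:transport}. Since \(b(A)\) is a rational function of \(A\), we have \(b(A)=T\,b(\diag(\alpha))\,T^{-1}=TBT^{-1}\), with \(B=\diag(b(\alpha_i))\) exactly as in Section~\ref{sec:sampling}. The transformation rules give
\[
 \widetilde P=T^{-1}PT^{-\top},\qquad \widetilde X=T^\top XT^{-\top},\qquad \widetilde Y=T^{-1}YT .
\]
In modal coordinates Section~\ref{sec:sampling} shows that \(\widetilde X=M_1M_2\) and \(\widetilde Y=M_2M_1\), with \(M_1=\widetilde P^{-1}B\) and \(M_2=\widetilde Q^{-1}B\). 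Hence \(M_1^{-1}\widetilde X=M_2M_1M_1^{-1}\cdot M_1\)-reversal gives \(M_1^{-1}\widetilde X=\widetilde Y M_1^{-1}\), that is,
\[
 (B^{-1}\widetilde P)\widetilde X=\widetilde Y(B^{-1}\widetilde P).
\]
Substituting the formulas above yields \(B^{-1}\widetilde P=T^{-1}b(A)^{-1}T\cdot T^{-1}PT^{-\top}=T^{-1}ZT^{-\top}\). The left side then becomes \(T^{-1}ZXT^{-\top}\) and the right side becomes \(T^{-1}YZT^{-\top}\). Cancelling \(T^{-1}\) and \(T^{-\top}\) gives \(ZX=YZ\) at every point of \(\mathcal U\).

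\emph{Step 2 (continuation).} Every entry of \(ZX-YZ\) is a rational function of \((A,v,w)\) whose denominators are built from \(\delta_P,\delta_Q,\delta_R\), \(\det P\), \(\det Q\), \(\det(I+PQ)\) and \(\det\chi_{A_2}(A)\). I will write these using adjugates together with the numerator matrices \(\widehat P,\widehat Q,\widehat R,\widehat N\) of Corollary~\ref{cor:det}.

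Multiplying by the common denominator gives matrix entries that are real polynomials. These polynomials vanish on \(\mathcal U\), so by the final argument in the proof of Lemma~\ref{lem:seed} they vanish identically. On \(\Omega^\circ\) every denominator is nonzero, so dividing back recovers \(ZX=YZ\) throughout \(\Omega^\circ\).

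The main obstacle is a mismatch of domains. Lemma~\ref{lem:seed} is phrased for denominators that are nonzero on \(\Omega\), whereas \(\det\chi_{A_2}(A)\) is nonzero only on \(\Omega^\circ\). I will therefore invoke the polynomial core of that lemma directly: a polynomial vanishing on the nonempty open set \(\mathcal U\) vanishes identically. I will also check that \(\det\chi_{A_2}(A)\) is a polynomial in the entries of \((A,v,w)\). This holds because \(\chi_{A_2}(A)\) is a matrix polynomial in \(A\) whose coefficients are polynomials in the entries.

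The remaining bookkeeping concerns the complex modal transformation \(T\). Only algebraic transposes are used throughout, and the identity \(b(A)=TBT^{-1}\) is stated over \(\mathbb C\). The final polynomial identity nonetheless has real coefficients, consistent with the remark on complex modal coordinates.
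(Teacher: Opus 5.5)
Your proposal is correct and follows the paper's proof essentially step for step. You establish $ZX=YZ$ on $\mathcal U$ from $Z=B^{-1}P=M_1^{-1}$ in modal coordinates, transport it back via $b(\widetilde A)=T^{-1}b(A)T$ and $\widetilde Z=T^{-1}ZT^{-\top}$, extend it to $\Omega^\circ$ by clearing denominators and using polynomial uniqueness, and read off invertibility of $Z$ from $\det P$, $\operatorname{res}(\chi_A,\chi_{A_2})$ and $\delta_R$. Your explicit note that $\det\chi_{A_2}(A)$ is nonzero only on $\Omega^\circ$, so one must use the polynomial core of Lemma~\ref{lem:seed} rather than its $\Omega$-phrased statement, tightens the paper's wording; separately, the garbled sentence before your display should simply read $M_1^{-1}\widetilde X=M_1^{-1}M_1M_2=M_2=M_2M_1M_1^{-1}=\widetilde Y M_1^{-1}$.
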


\begin{proof}
Set
\[
 M=P^{-1}RQ^{-1}R^\top,\qquad N=(I+PQ)^{-1}.
\]
In modal coordinates on \(\mathcal U\), \(b(A)=B\), and hence
\(Z=B^{-1}P=M_1^{-1}\).  The product reversal therefore gives
\begin{equation}\label{eq:similarity-relation}
 ZM=NZ.
\end{equation}
Under a state-coordinate change,
\[
 \begin{aligned}
  \widetilde A&=T^{-1}AT,&
  b(\widetilde A)&=T^{-1}b(A)T,\\
  \widetilde P&=T^{-1}PT^{-\top}.&&
 \end{aligned}
\]
and therefore
\[
 \widetilde Z=b(\widetilde A)^{-1}\widetilde P
             =T^{-1}ZT^{-\top}.
\]
Together with Lemma~\ref{lem:transport}, these laws make
\eqref{eq:similarity-relation} independent of the chosen coordinates.
After clearing the displayed inverses, every entry of
\[
 Z(P^{-1}RQ^{-1}R^\top)-(I+PQ)^{-1}Z
\]
has a polynomial numerator and a denominator nonzero on \(\Omega^\circ\).
Its numerator vanishes on \(\mathcal U\), hence vanishes identically by
Lemma~\ref{lem:seed}.  Thus \eqref{eq:similarity-relation} holds throughout
\(\Omega^\circ\).  There \(P\), \(\chi_{A_2}(A)\), and
\(\chi_{A_2}^*(A)\) are invertible, so \(Z\) is invertible and the
matrix relation is the stated similarity.
\end{proof}

Repeated or defective eigenvalues, and singular \(A\) or \(A_2\), do not
obstruct this formula. What it additionally requires is
\(\sigma(A)\cap\sigma(A_2)=\varnothing\). The characteristic-polynomial
identity of Theorem~\ref{thm:main} remains valid when \(A\) and \(A_2\) have a
common eigenvalue; no nonsingular limiting similarity matrix is asserted at
such a point.

\section{Domain and rank-one scope}\label{sec:comparison}

For Theorem~\ref{thm:main} it is enough to clear the rational
characteristic-polynomial difference on the natural nonsingular domain
$\Omega$. Clearing all Lyapunov-operator and matrix-inverse denominators proves
more: it constructs polynomial numerator matrices on the entire parameter
space, proves \eqref{eq:clearedhom} coefficient by coefficient, and then
evaluates it on all of $\mathcal D$. This is why Corollary~\ref{cor:det}
remains valid at
points where $P,Q$, or $R$ is singular.

Conversely, the explicit similarity matrix has a smaller natural domain. The
denominator $\chi_{A_2}^*(A)$ is controlled by $\delta_R$, but the
numerator $\chi_{A_2}(A)$ is invertible only when \(A\) and \(A_2\) have
disjoint spectra. Thus $Z=b(A)^{-1}P$ extends from the open set \(\mathcal U\)
through $\Omega^\circ$, whereas the characteristic-polynomial identity remains
valid when \(A\) and \(A_2\) have a common eigenvalue. Equality of spectra alone
does not justify extending $Z$ through a point at which it becomes
singular.

The rank-one assumption enters twice and nowhere else: the forcing term in
\eqref{eq:factor} is $ww^\top$, so the determinant lemma produces a
scalar quotient, and evaluation at the reciprocal eigenvalue uses
$(\alpha_iI-A_2^\top)e_i=-v_iw$. A genuinely higher-rank update would
replace both by matrix-valued relations. It cannot be obtained by merely
changing $b$ into a block diagonal matrix; this agrees with the known
higher-rank counterexamples recorded with Problem~9.1.

\section{Consequences}\label{sec:consequences}

This section collects what the proof makes explicit. The principal-angle
spectra of Corollary~\ref{cor:angles} are not new --- they are the
subspace canonical correlations studied by De~Cock~\citep{DeCock2002,%
DeCockDeMoor2002}; the present contribution is that the proof renders each
as the spectrum of an explicit matrix. The similarity matrix $Z$ of
Proposition~\ref{prop:intertwiner} is specific to this argument.

\begin{corollary}[Principal-angle family]\label{cor:angles}
In the Schur-stable case, assume in addition that $(A,v)$ is
controllable and $(A_2,w)$ observable, so that the Gramians
$P=\mathcal C_\infty\mathcal C_\infty^\top$,
$Q=\mathcal O_\infty\mathcal O_\infty^\top$ (and
$R=\mathcal C_\infty\mathcal O_\infty^\top$) have factors of full row rank
and $P,Q\succ0$. Then the principal angles $\theta_i$
between $\range(\mathcal C_\infty^\top)$ and
$\range(\mathcal O_\infty^\top)$~\citep{BjorckGolub1973}
have all three squared-angle spectra as explicit matrix spectra.
\[
  \{\cos^2\theta_i\}_{i=1}^{n}=\sigma\bigl((I+PQ)^{-1}\bigr).
\]
\[
  \{\tan^2\theta_i\}_{i=1}^{n}=\sigma(PQ).
\]
\[
  \{\sin^2\theta_i\}_{i=1}^{n}=\sigma\bigl(PQ(I+PQ)^{-1}\bigr).
\]
The $\cos^2$ form is Theorem~\ref{thm:main}. The $\cos\theta_i$ are the
\emph{subspace} canonical correlations, the $r_i$ the \emph{past/future
process} ones (Remark~\ref{rem:conventions}); the two notions are related by
the Pythagoras complement $\sin^2=1-\cos^2$, with $\tan^2=\sin^2/\cos^2$.
\end{corollary}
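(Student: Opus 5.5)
The plan is to reduce all three spectral statements to the already-established $\cos^2$ identity, then use elementary spectral-mapping facts. Under the stated hypotheses (Schur stability, $(A,v)$ controllable, $(A_2,w)$ observable), Step~1 of the proof of Proposition~\ref{prop:p91} and Remark~\ref{rem:gramians} show three things. The Gramians are $P=\mathcal C_\infty\mathcal C_\infty^\top\succ0$ and $Q=\mathcal O_\infty\mathcal O_\infty^\top\succ0$. The data lie in $\mathcal D$, since Schur stability forces every eigenvalue product to have modulus $<1$. And $\{\cos^2\theta_i\}=\sigma(P^{-1}RQ^{-1}R^\top)$, via the singular values of $P^{-1/2}RQ^{-1/2}$ \citep{BjorckGolub1973}. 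Since $PQ$ is similar to $P^{1/2}QP^{1/2}\succeq0$, it has nonnegative real eigenvalues, so $I+PQ$ is nonsingular and $(A,v,w)\in\Omega$. Theorem~\ref{thm:main} then gives the first display directly.

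For the other two displays, I will use that $PQ$ is diagonalizable with real spectrum $\{\mu_i\}\subset[0,\infty)$, again by similarity to the symmetric matrix $P^{1/2}QP^{1/2}$. The matrix $(I+PQ)^{-1}$ is a rational function of $PQ$, so the spectral mapping theorem, with multiplicities, gives $\sigma((I+PQ)^{-1})=\{1/(1+\mu_i)\}$. Matching with the first display, and ordering both families as multisets so that the $\theta$-indexing agrees, gives $\cos^2\theta_i=1/(1+\mu_i)$. Because $\mu_i\ge0$ is finite, $\cos\theta_i>0$, so $\theta_i<\pi/2$ and $\tan^2\theta_i=1/\cos^2\theta_i-1=\mu_i$. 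This yields $\sigma(PQ)=\{\tan^2\theta_i\}$.

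For the third display, I write $PQ(I+PQ)^{-1}=I-(I+PQ)^{-1}$. This is the rational function $\mu\mapsto\mu/(1+\mu)$ of $PQ$, whose eigenvalues are $1-\cos^2\theta_i=\sin^2\theta_i$.

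The only delicate point is bookkeeping of multisets: the spectral mapping must respect multiplicities, and the correspondence $\theta_i\leftrightarrow\mu_i$ must be consistent. I handle this by sorting, since $\mu\mapsto1/(1+\mu)$ is strictly monotone, so the increasing $\theta_i$ correspond to the increasing order of $\cos^{-2}$. Everything else is a direct invocation of Theorem~\ref{thm:main} and Remark~\ref{rem:gramians}.
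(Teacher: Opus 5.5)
Your proposal is correct and is essentially the paper's argument. The $\cos^2$ display is Theorem~\ref{thm:main} combined with the Gramian/principal-angle identification of Step~1 of Proposition~\ref{prop:p91} and Remark~\ref{rem:gramians}, and the $\tan^2$ and $\sin^2$ displays follow by applying $\nu\mapsto 1/\nu-1$ and $\nu\mapsto 1-\nu$ to $\sigma\bigl((I+PQ)^{-1}\bigr)$, which is the Pythagoras/$\tan^2=\sin^2/\cos^2$ bookkeeping the paper states. Your explicit check that the data lie in $\Omega$ (Schur stability gives $\mathcal D$, and $PQ\sim P^{1/2}QP^{1/2}\succ0$ makes $I+PQ$ invertible) fills in a step the paper leaves implicit; the diagonalizability and sorting arguments are harmless but unnecessary, since rational spectral mapping preserves algebraic multiplicities for any matrix.
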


\begin{remark}[Hankel singular values]\label{rem:hankel}
Classically, the Hankel singular values associated with Gramians $P,Q$ are
the nonnegative square roots of the eigenvalues in $\sigma(PQ)$, counted with
algebraic multiplicity \citep{LindquistPicci1996}. Thus
Corollary~\ref{cor:angles} identifies them, in this paired-Gramian setting,
with the tangents $\tan\theta_i$. Here $P$ and $Q$ are Gramians of the two
\emph{different} state matrices $A$ and $A_2$, so the statement concerns the
Gramian pair, not the Hankel values of a single realisation.
\end{remark}

\section{Remarks}\label{sec:remarks}

\begin{remark}[Cepstral norm]\label{rem:cepstral}
In the Schur-stable stochastic interpretation, Corollary~\ref{cor:det} becomes
the usual cepstral/mutual-information identity
\[
  \log\det(I+PQ)=-\sum_i\log\bigl(1-r_i^{2}\bigr),
\]
where $r_i$ are the past/future canonical correlations; the left
side is the squared Martin cepstral norm \citep{Martin2000}, equal for a
Gaussian process to twice the mutual information $I(Y_{\mathrm p};Y_{\mathrm f})$,
where $Y_{\mathrm p}:=(\ldots,y(-2),y(-1))$ and
$Y_{\mathrm f}:=(y(0),y(1),\ldots)$ denote the past and future output sequences
(\citealp{GelfandYaglom1959}; in the realization setting,
\citealp{Akaike1975}; \citealp{JewellBloomfieldBartmann1983}).
\end{remark}

\begin{remark}[Relation to previous work]\label{rem:prevwork}
The realization-theoretic treatments \citep{DeCock2002,DeCockDeMoor2002}
establish the equivalent canonical-correlation relation on the Schur-stable
stochastic face, and \citet{Scherrer2002} treats the associated relation
there through minimum-phase balanced truncation and operator methods, using
the Lindquist--Picci canonical-correlation identity \citep{LindquistPicci1996}.
More generally, the connection between all-pass transfer matrices and
future/past canonical correlations is classical; see \citet{Green1988} and
\citet{LindquistPicci1996}.
The exact matrix identity was recorded as Conjecture~3.13 of the thesis
\citep{DeCock2002} and, in 2004, as Problem~9.1. The present contribution has
a different form and scope: a direct finite-dimensional Lyapunov-kernel
similarity on a nonempty full-dimensional open set and a polynomial extension
to the full domain excluding reciprocal eigenvalue products, with no stability
or definiteness
hypothesis. We therefore describe it as a direct matrix-algebraic proof and
full-domain extension of the statement of Problem~9.1, rather than as a first
proof on any open set.

Evaluating the single kernel \eqref{eq:kernel} at the eigenvalues and at their
reciprocals leads to the reversal of the factors \(M_1,M_2\). The
denominator-cleared identity
\eqref{eq:clearedhom} extends through defective and repeated matrices, through
cases where \(A\) and \(A_2\) have a common eigenvalue, and through singular-node
data in $\mathcal D$, and yields
Corollary~\ref{cor:det} even where $P$ or $Q$ is singular. The explicit
similarity matrix $Z=b(A)^{-1}P$ has the separate domain $\Omega^\circ$
(Proposition~\ref{prop:intertwiner}).
\end{remark}

\section{Conclusion}\label{sec:conclusion}

We have given a direct matrix-algebraic proof of the rank-one Lyapunov
spectral identity of Problem~9.1. One determinant factorization of the
Lyapunov equation produces a bilinear kernel; evaluating it at the eigenvalues
and at their reciprocals gives
$RQ^{-1}R^\top=BQ^{-1}B=P-BPB$, and the target matrices become
the products $M_1M_2$ and $M_2M_1$. On $\Omega^\circ$ the argument supplies the
explicit similarity matrix $Z=b(A)^{-1}P$ in the original state coordinates.
After the
denominators are cleared, polynomial uniqueness carries the
characteristic-polynomial identity to the full domain excluding reciprocal
eigenvalue products and yields the determinant corollary without
nonsingularity assumptions. The
identity gives the principal-angle and canonical-correlation spectra as
explicit matrix spectra (Section~\ref{sec:consequences}).

When $A$ and $A_2$ have a common eigenvalue, the spectral identity remains
valid, but the displayed similarity matrix need not remain nonsingular; the
similarity statement is therefore restricted to disjoint spectra. The
unrestricted multicolumn extension is false, although a
simultaneous block reduction to independent rank-one channels inherits the
identity blockwise. Beyond such reducible cases, it remains open whether
genuinely coupled higher-rank data admit an analogue under an appropriate
matrix-valued Lyapunov-kernel or all-pass hypothesis.

\appendix

\section{Construction of the open set}\label{app:generic-locus}

This appendix supplies the explicit witness calculation used in
Lemma~\ref{lem:seed}.  Choose
\[
 \begin{gathered}
  \alpha_i^0=\frac{i}{2(n+1)},\qquad
  A_0=\diag(\alpha_1^0,\ldots,\alpha_n^0),\\
  v_0=w_0=\tau\one,\qquad t=\tau^2,
 \end{gathered}
\]
with $t>0$ sufficiently small that
\[
             t\sum_{i=1}^n\frac{1}{1-\alpha_i^0}<1.
\]
For $\lambda\ne\alpha_i^0$, the rank-one determinant formula gives
\[
 \det(\lambda I-A_0-t\one\one^\top)
 =\prod_i(\lambda-\alpha_i^0)
   \left(1-t\sum_i\frac{1}{\lambda-\alpha_i^0}\right).
\]
The parenthesized function is strictly increasing between consecutive poles,
with limits $-\infty$ and $+\infty$ on every finite gap and limit $1$ at
$+\infty$.  It has one simple zero in every
$(\alpha_i^0,\alpha_{i+1}^0)$ and one above $\alpha_n^0$.  The smallness
condition makes its value at $1$ positive, so the last zero is also below $1$.
Thus the eigenvalues of $A_{2,0}=A_0+t\one\one^\top$ strictly interlace those
of $A_0$; both matrices are Schur stable, their spectra are simple and
disjoint, and every product drawn from the two spectra differs from $1$.

The controllability matrix of $(A_0,v_0)$ is a row-scaled Vandermonde matrix,
with determinant
\[
 \tau^n\prod_{i<j}(\alpha_j^0-\alpha_i^0)\ne0.
\]
The symmetric matrix $A_{2,0}$ is observable from $w_0^\top$: an eigenvector
$\xi\ne0$ satisfying $\one^\top\xi=0$ would give
$A_{2,0}\xi=A_0\xi$, contradicting
spectral disjointness.  The convergent Gramian series therefore give
$P_0,Q_0\succ0$.  Moreover, $P_0Q_0$ is similar to
$P_0^{1/2}Q_0P_0^{1/2}\succ0$, so $I+P_0Q_0$ is nonsingular.  The witness lies
in $\Omega^\circ$.

Take $\mathcal U$ to be a sufficiently small common nonvanishing neighborhood
of this witness for
\begin{gather*}
 \delta_P,\ \delta_Q,\ \delta_R,\qquad
 \disc\chi_A,\qquad \det A,\\
 \det[v,Av,\ldots,A^{n-1}v],\qquad
 \operatorname{res}(\chi_A,\chi_{A_2}),\\
 \det P,\qquad \det Q,\qquad \det(I+PQ).
\end{gather*}
The first seven expressions are polynomials in $(A,v,w)$.  On $\mathcal D$,
the last three are rational functions whose denominators are powers of the
nonzero $\delta$'s.  All are nonzero at the witness, so the set on which they
are simultaneously nonzero contains an open ball.  No simple spectrum or
nonsingularity condition for $A_2$, no observability condition, and no
condition on $\det A_2$ or $\det R$ is imposed on the neighborhood;
observability was used only to certify $Q_0\succ0$ at the witness.

For the identity principle, choose a common denominator for the finitely many
rational coefficients.  It is nonzero on $\Omega$.  Every resulting numerator
is a real polynomial that vanishes on the nonempty open set $\mathcal U$ and
therefore is the zero polynomial.  Dividing by the common denominator gives
the last assertion of Lemma~\ref{lem:seed}.

\paragraph{Denominator inventory for the modal calculation.}
At a point of \(\mathcal U\), \(\alpha_i\ne0\) follows from \(\det A\ne0\),
and \(v_i\ne0\) follows from controllability.  The conditions
\(1-\alpha_i\alpha_j\ne0\) and \(1-\alpha_i\beta_l\ne0\) follow from
\(\delta_P\delta_R\ne0\), while
\(\alpha_i-\beta_l\ne0\) follows from
\(\sigma(A)\cap\sigma(A_2)=\varnothing\).  Hence
\(I-\alpha_iA_2\), \(I-\alpha_iA_2^\top\),
\(\alpha_iI-A_2^\top\), and
\(I-\alpha_i^{-1}A_2^\top\) are invertible wherever they occur.
Membership in \(\Omega\) supplies nonsingular \(P,Q,I+PQ\).  These are all
the divisions used in the two kernel evaluations; no inverse of \(R\) is
introduced.

\section{Inverse-free polynomial continuation}\label{app:polynomial-continuation}

The proof in Section~\ref{sec:proof} uses only rational continuation on the
natural nonsingular domain \(\Omega\).  This appendix records the stronger
inverse-free statement on all of \(\mathcal D\), including data for which one
of the matrices occurring in the spectral formula is singular.

\begin{lemma}[Regularity on $\mathcal D$ and clearing]\label{lem:regular}%
Recall the Lyapunov-operator determinants $\delta_P,\delta_Q,\delta_R$
(Table~\ref{tab:notation-core}), and define the \emph{numerator matrices}
\[
  \widehat P:=\delta_P P,\quad \widehat Q:=\delta_Q Q,\quad \widehat R:=\delta_R R.
\]
Then \textup{(i)}~each of $\widehat P,\widehat Q,\widehat R$ is an $n\times n$
matrix of \emph{polynomials} in $(A,v,w)$ on the entire parameter space, and
$\delta_P,\delta_Q,\delta_R$ are nonzero on $\mathcal D$, so every determinant
formed from $P,Q,R$ is regular on $\mathcal D$, with denominator read directly
off the numerator matrices:
$\det P=\det\widehat P/\delta_P^{\,n}$, $\det Q=\det\widehat Q/\delta_Q^{\,n}$,
$\det R=\det\widehat R/\delta_R^{\,n}$, $\adj Q=\delta_Q^{-(n-1)}\adj\widehat Q$,
and $\det(I+PQ)=\det(\delta_P\delta_Q I+\widehat P\widehat Q)/(\delta_P\delta_Q)^{n}$.
\textup{(ii)}~The denominator-cleared relation
\begin{equation}\label{eq:cleared}
\begin{aligned}
  &\det(I+PQ)\,\det\!\bigl(\det(Q)\lambda P-R\adj(Q)R^\top\bigr)\\
  &\qquad=\det(P)\det(Q)^n\det\!\bigl((\lambda-1)I+\lambda PQ\bigr)
\end{aligned}
\end{equation}
--- the \emph{cleared equation}, whose validity on $\mathcal D$ is proved in
Lemma~\ref{lem:divideback}.
Multiplied through by the factor
\[
 (\delta_P\delta_Q)^{n}
 (\delta_P\delta_Q^{\,n}\delta_R^{2})^{n},
\]
it
becomes the identity of \emph{polynomial} matrices and determinants
\begin{multline}\label{eq:clearedhom}
\det\widehat N\,
\det\!\bigl(\delta_R^{2}\det(\widehat Q)\,\lambda\,\widehat P
      -\delta_P\delta_Q\,\widehat R\,\adj(\widehat Q)\,\widehat R^{\top}\bigr)\\
 =\delta_R^{\,2n}\,\det\widehat P\,\det(\widehat Q)^{n}
  \det\!\bigl((\lambda-1)\delta_P\delta_Q I+\lambda\,\widehat P\widehat Q\bigr),
\end{multline}
with $\widehat N:=\delta_P\delta_Q I+\widehat P\widehat Q$: every symbol in
\eqref{eq:clearedhom} is a polynomial in the entries of $(A,v,w)$, each
$\lambda^k$-coefficient of the difference of its two sides is one polynomial
in $(A,v,w)$, and on $\mathcal D$, where the $\delta$'s are nonzero,
\eqref{eq:clearedhom} holds if and only if \eqref{eq:cleared} does.
\end{lemma}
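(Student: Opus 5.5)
The statement has two parts. Part~(i) is a Cramer's-rule fact about the three linear Lyapunov/Sylvester operators. Part~(ii) is a bookkeeping check that multiplying \eqref{eq:cleared} by the displayed factor turns each of its terms into a polynomial expression in the numerator matrices. I would treat them in that order.

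For (i), vectorize the three equations:
\[
(I-A\otimes A)\operatorname{vec}P=\operatorname{vec}(vv^\top),\qquad
(I-A_2^\top\otimes A_2^\top)\operatorname{vec}Q=\operatorname{vec}(ww^\top),
\]
and similarly $(I-A_2^\top\otimes A)\operatorname{vec}R=\operatorname{vec}(vw^\top)$. The coefficient matrices have entries polynomial in $(A,v,w)$, since $A_2=A+vw^\top$ is polynomial. The right-hand sides are polynomial as well, and the determinants are exactly $\delta_P,\delta_Q,\delta_R$.

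Cramer's rule, or equivalently multiplication by the adjugate, gives $\operatorname{vec}\widehat P=\adj(I-A\otimes A)\operatorname{vec}(vv^\top)$, and likewise for $\widehat Q$ and $\widehat R$. These are polynomial on the whole parameter space. On $\mathcal D$ the $\delta$'s are nonzero by definition, which makes $P,Q,R$ regular there. The determinant formulas then follow from homogeneity of degree $n$: $\det(cX)=c^n\det X$. The adjugate has degree $n-1$, so $\adj(cX)=c^{n-1}\adj X$. Finally, $I+PQ=(\delta_P\delta_Q)^{-1}\widehat N$.

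For (ii), substitute $P=\widehat P/\delta_P$, $Q=\widehat Q/\delta_Q$, $R=\widehat R/\delta_R$ into \eqref{eq:cleared} and track the powers of each $\delta$. The left side is handled term by term.
\begin{itemize}
\item The first factor is $\det(I+PQ)=\det\widehat N/(\delta_P\delta_Q)^n$.
\item Inside the second factor, $\det(Q)\lambda P=\det\widehat Q\,\lambda\widehat P/(\delta_Q^n\delta_P)$.
\item Also $R\adj(Q)R^\top=\widehat R\adj(\widehat Q)\widehat R^\top/(\delta_R^2\delta_Q^{n-1})$.
\end{itemize}
The common denominator of the inner matrix is $\delta_P\delta_Q^n\delta_R^2$, so the inner matrix equals
\[
\bigl(\delta_R^2\det\widehat Q\,\lambda\widehat P-\delta_P\delta_Q\widehat R\adj(\widehat Q)\widehat R^\top\bigr)\big/(\delta_P\delta_Q^n\delta_R^2).
\]
Its determinant therefore carries the factor $(\delta_P\delta_Q^n\delta_R^2)^{-n}$. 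Multiplying the left side by $(\delta_P\delta_Q)^n(\delta_P\delta_Q^n\delta_R^2)^n$ gives exactly the left side of \eqref{eq:clearedhom}.

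The right side of \eqref{eq:cleared} picks up the following factors.
\begin{itemize}
\item $\det P=\det\widehat P/\delta_P^n$.
\item $\det(Q)^n=\det(\widehat Q)^n/\delta_Q^{n^2}$.
\item $\det((\lambda-1)I+\lambda PQ)=\det((\lambda-1)\delta_P\delta_QI+\lambda\widehat P\widehat Q)/(\delta_P\delta_Q)^n$.
\end{itemize}
The total denominator is $\delta_P^{2n}\delta_Q^{n+n^2}$. The multiplier equals $\delta_P^{2n}\delta_Q^{n+n^2}\delta_R^{2n}$, so the surviving factor on the right is $\delta_R^{2n}$, matching \eqref{eq:clearedhom}. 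Every symbol in \eqref{eq:clearedhom} is then polynomial by (i), and so are its $\lambda^k$-coefficients.

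On $\mathcal D$ the multiplier is a nonzero scalar, so the two equations are equivalent there. The only real obstacle is this exponent accounting, in particular the $\delta_Q^{n-1}$ contributed by the adjugate and the $n^2$ coming from $\det(Q)^n$. I would verify it once by comparing total $\delta$-degrees on both sides. No spectral argument is needed; validity itself is deferred to Lemma~\ref{lem:divideback}.
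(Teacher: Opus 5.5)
Your proposal is correct and takes essentially the same route as the paper. You vectorize, use Cramer's rule (the adjugate) to obtain the polynomial numerator matrices, use the degree-$n$ homogeneity of $\det$ and the degree-$(n-1)$ homogeneity of $\adj$, and then substitute into \eqref{eq:cleared} and collect powers of the $\delta$'s; your explicit exponent count (the $\delta_Q^{n-1}$ from the adjugate, the $\delta_Q^{n^2}$ from $\det(Q)^n$, and the surviving $\delta_R^{2n}$ on the right) checks out and just spells out what the paper's proof and Appendix~\ref{app:clearing-details} compress into ``collecting scalar factors.''
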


\begin{proof}
Vectorizing the three equations and applying Cramer's rule writes every entry
of $P,Q,R$ as a polynomial numerator divided respectively by
$\delta_P,\delta_Q,\delta_R$.  The determinant formulas in \textup{(i)} then
follow by multilinearity.  Substituting
$P=\widehat P/\delta_P$, $Q=\widehat Q/\delta_Q$ and
$R=\widehat R/\delta_R$ into \eqref{eq:cleared}, and collecting scalar factors,
gives \eqref{eq:clearedhom}; the converse follows by dividing by the nonzero
$\delta$'s on $\mathcal D$.  The explicit vectorized formulas and denominator
bookkeeping are recorded in Appendix~\ref{app:clearing-details}.
\end{proof}

The passage from $\mathcal U$ to all of $\mathcal D$ now follows from
determinant algebra and polynomial uniqueness.

\subsection{The continuation}\label{subsec:continuation}

\begin{lemma}[Polynomial identity and recovery of the original formula]\label{lem:divideback}%
With the denominator-cleared equation \eqref{eq:cleared} of
Lemma~\ref{lem:regular}, the following hold.
\begin{enumerate}
\item[\textup{(i)}] \emph{Equivalence when the required matrices are
nonsingular.} At any
  $(A,v,w)$ with $P,Q,I+PQ$ nonsingular, \eqref{eq:cleared} is equivalent to the
  characteristic-polynomial equality
  \[
    \det(\lambda I-P^{-1}RQ^{-1}R^\top)
      =\det(\lambda I-(I+PQ)^{-1}),
  \]
  and hence to
  $\sigma(P^{-1}RQ^{-1}R^\top)=\sigma((I+PQ)^{-1})$.
\item[\textup{(ii)}] \emph{Polynomial extension to $\mathcal D$.} Expand the
  difference of the two sides of the cleared form \eqref{eq:clearedhom} in
  $\lambda$: by Lemma~\ref{lem:regular} its coefficients are
  finitely many genuine polynomials $\widehat c_k\in\mathbb R[A,v,w]$ on the entire
  parameter space. Each $\widehat c_k$ vanishes on $\mathcal U$ (by \textup{(i)}
  and the product reversal \eqref{eq:single}), hence vanishes identically by
  Lemma~\ref{lem:seed};
  so \eqref{eq:cleared} holds throughout $\mathcal D$. (Lemma~\ref{lem:seed}
  is applied to the \emph{polynomials} $\widehat c_k$, not to a rational
  identity.)
\item[\textup{(iii)}] \emph{Recovery of the original formula.} Consequently,
  at every $(A,v,w)\in\mathcal D$ with $P,Q,I+PQ$ nonsingular, cancelling the
  nonzero determinant factors in \eqref{eq:cleared} and applying \textup{(i)}
  gives $\sigma(P^{-1}RQ^{-1}R^\top)=\sigma((I+PQ)^{-1})$.
\end{enumerate}
\end{lemma}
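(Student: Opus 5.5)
For part (i), I will work at a point where $P$, $Q$ and $I+PQ$ are nonsingular, so that $\adj(Q)=\det(Q)\,Q^{-1}$. I then factor the left determinant of \eqref{eq:cleared}:
\[
 \det\bigl(\det(Q)\lambda P-R\adj(Q)R^\top\bigr)
 =\det(Q)^n\det(P)\det\bigl(\lambda I-P^{-1}RQ^{-1}R^\top\bigr).
\]
For the right side, I write $(\lambda-1)I+\lambda PQ=(I+PQ)\bigl(\lambda I-(I+PQ)^{-1}\bigr)$. This gives
\[
 \det\bigl((\lambda-1)I+\lambda PQ\bigr)=\det(I+PQ)\det\bigl(\lambda I-(I+PQ)^{-1}\bigr).
\]
Both sides of \eqref{eq:cleared} then carry the common nonzero factor $\det(I+PQ)\det(P)\det(Q)^n$. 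Cancelling it shows that \eqref{eq:cleared} is equivalent to equality of the two characteristic polynomials. Equal characteristic polynomials are the same as equal eigenvalue multisets, which gives (i).

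For part (ii), I first use Lemma~\ref{lem:regular}: the difference of the two sides of \eqref{eq:clearedhom} is a polynomial in $\lambda$ whose coefficients $\widehat c_k$ are polynomials in $(A,v,w)$. On $\mathcal U\subset\Omega$ the $\delta$'s are nonzero and $P,Q,I+PQ$ are nonsingular. There \eqref{eq:single} holds, so by (i) the relation \eqref{eq:cleared} holds, and by the equivalence in Lemma~\ref{lem:regular} so does \eqref{eq:clearedhom}. Hence each $\widehat c_k$ vanishes on the nonempty open set $\mathcal U$. A real polynomial vanishing on a nonempty open set is zero, which is the polynomial case of Lemma~\ref{lem:seed}, so each $\widehat c_k\equiv0$ and \eqref{eq:clearedhom} holds identically. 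Dividing by the nonzero $\delta$-powers, using the equivalence on $\mathcal D$ in Lemma~\ref{lem:regular}, gives \eqref{eq:cleared} on all of $\mathcal D$.

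Part (iii) then follows by combining (ii) with (i) at any point of $\mathcal D$ where $P,Q,I+PQ$ are nonsingular.

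The main obstacle is the bookkeeping that links \eqref{eq:cleared} and \eqref{eq:clearedhom}. Substituting $P=\widehat P/\delta_P$ and $Q=\widehat Q/\delta_Q$, and using $\adj Q=\delta_Q^{-(n-1)}\adj\widehat Q$, must produce the stated multiplier exactly. If it does not, I would recompute the powers of $\delta_P,\delta_Q,\delta_R$ directly. Any nonzero monomial in the $\delta$'s would serve equally well, since only the non-vanishing of that factor on $\mathcal D$ is used.
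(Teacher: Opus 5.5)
Your proposal is correct and follows the paper's proof. Part (i) rests on the same two factorizations, $\det\bigl(\det(Q)\lambda P-R\adj(Q)R^\top\bigr)=\det(P)\det(Q)^n\det(\lambda I-P^{-1}RQ^{-1}R^\top)$ and $(\lambda-1)I+\lambda PQ=(I+PQ)\bigl(\lambda I-(I+PQ)^{-1}\bigr)$, and parts (ii)--(iii) use the same vanishing-on-$\mathcal U$ plus division-by-$\delta$'s argument. The bookkeeping you flag also checks out: substituting $\widehat P/\delta_P$, $\widehat Q/\delta_Q$, $\widehat R/\delta_R$ and $\adj Q=\delta_Q^{-(n-1)}\adj\widehat Q$ into \eqref{eq:cleared} yields \eqref{eq:clearedhom} with exactly the stated multiplier $(\delta_P\delta_Q)^n(\delta_P\delta_Q^{\,n}\delta_R^2)^n$.
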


\begin{proof}
Write $M:=P^{-1}RQ^{-1}R^\top$ and $N:=(I+PQ)^{-1}$. Both implications are
determinant identities valid for arbitrary matrices, with
$Q^{-1}=\adj(Q)/\det(Q)$:
\[
  \det(\lambda I-M)
   =\frac{\det\!\bigl(\det(Q)\lambda P-R\adj(Q)R^\top\bigr)}{\det(P)\det(Q)^n},
\]
\[
  \det(\lambda I-N)=\frac{\det\!\bigl((\lambda-1)I+\lambda PQ\bigr)}{\det(I+PQ)}.
\]
Multiplying the first by $\det(P)\det(Q)^n$ and the second by $\det(I+PQ)$
turns the characteristic-polynomial equality into \eqref{eq:cleared} and back;
the cleared sides are polynomial by Lemma~\ref{lem:regular}; recovering the
original equality needs only $\det(P)\det(Q)\det(I+PQ)\neq0$, so no
diagonalizability enters.
\end{proof}

This identity extends through singular data and through points where \(A\) and
\(A_2\) have a common eigenvalue.  It recovers Theorem~\ref{thm:main} on
\(\Omega\), without a diagonalizability, simple-spectrum, or nonsingular-
\(R\) assumption.

\section{Polynomial numerator bookkeeping}\label{app:clearing-details}

The Kronecker--vectorized solutions of \eqref{eq:P}--\eqref{eq:R} are
\[
\begin{aligned}
 \vec(P)&=\delta_P^{-1}\adj(I-A\otimes A)\,\vec(vv^\top),\\
 \vec(Q)&=\delta_Q^{-1}\adj(I-A_2^\top\otimes A_2^\top)\,\vec(ww^\top),\\
 \vec(R)&=\delta_R^{-1}\adj(I-A_2^\top\otimes A)\,\vec(vw^\top).
\end{aligned}
\]
Thus every entry of $P,Q,R$ is rational in $(A,v,w)$ with respective
denominator $\delta_P,\delta_Q,\delta_R$ (possibly reduced after
cancellation), and $\widehat P,\widehat Q,\widehat R$ have polynomial entries.
The row-pair vectorization used in the formalization writes the last operator
with the two Kronecker factors in the opposite order.  Swapping these factors
is a simultaneous row-and-column permutation and therefore leaves its
determinant, $\delta_R$, unchanged.

On $\mathcal D$, the matrices $A\otimes A$,
$A_2^\top\otimes A_2^\top$, and $A_2^\top\otimes A$ have eigenvalues
$\alpha_i\alpha_j$, $\beta_i\beta_j$, and $\beta_i\alpha_j$, respectively,
none equal to $1$.  Hence the three $\delta$'s are nonzero, and the determinant
formulas of Lemma~\ref{lem:regular}\textup{(i)} follow by multilinearity.
Substituting $P=\widehat P/\delta_P$, $Q=\widehat Q/\delta_Q$, and
$R=\widehat R/\delta_R$ into \eqref{eq:cleared}, and absorbing the scalar
factors, gives \eqref{eq:clearedhom}.  Every displayed term is then polynomial;
division by the same nonzero scalars on $\mathcal D$ proves the converse.

\section*{CRediT authorship contribution statement}

\textbf{Jonas Gillberg:} Conceptualization, Methodology, Formal analysis,
Software, Validation, Visualization, Writing -- original draft, Writing --
review \& editing.
\textbf{Johan L\"ofberg:} Investigation. He contributed helpful
simplifications that streamlined the proof and provided constructive comments
on the manuscript.

\section*{Declaration of generative AI and AI-assisted technologies in
the writing and research process}

During the preparation of this work, the authors used ChatGPT and Codex
(OpenAI) and Claude (Anthropic) for exploratory proof development, language
editing, software development, and assistance with the Lean formalization.
The authors reviewed and edited the resulting material, checked the
mathematical arguments and computations reported in the paper, and take full
responsibility for the content of the publication.

\section*{Declaration of competing interest}

The authors declare that they have no known competing financial
interests or personal relationships that could have influenced the work
reported in this paper.

\section*{Data availability}

A Lean~4 formalization, exact computer-algebra calculations, numerical tests,
and reproduction instructions are available in the accompanying repository
snapshot. These materials support, but do not replace, the complete and
self-contained proofs in this paper. A versioned public archive with a citable
DOI will be deposited after author and independent review.

\ifarxivhtml
\section*{Acknowledgements}

The authors thank B.~De~Moor and K.~De~Cock for surfacing the problem that
motivated this note, and B.~Wahlberg for early encouragement.

\section*{Author biographies}

\subsection*{Jonas Gillberg}

Jonas Gillberg received the Ph.D.\ degree in Automatic Control from
Link\"oping University, Link\"oping, Sweden, in 2006, where his thesis work
on frequency-domain system identification was supervised by Prof.~Lennart
Ljung. He was previously with IBM~Research in the Quantum Computing division
and is currently with ZyQE in Stockholm, Sweden. His research interests in
control theory are numerical methods, system identification, and convex
optimization.

\subsection*{Johan L\"ofberg}

Johan L\"ofberg received the Ph.D.\ degree in Automatic Control from
Link\"oping University, Link\"oping, Sweden, in 2003, with a thesis on minimax
approaches to robust model predictive control. After three years as a
postdoctoral researcher at the Automatic Control Laboratory, ETH Z\"urich,
Switzerland, he returned to Link\"oping University, where he was appointed
Docent in 2011 and is currently Associate Professor in the Division of
Automatic Control, Department of Electrical Engineering. His research lies at
the border between control theory and optimization, and he is the developer
of the YALMIP modelling language for optimization in MATLAB.
\else
\ack{The authors thank B.~De~Moor and K.~De~Cock for surfacing the
problem that motivated this note, and B.~Wahlberg for early
encouragement.}

\biographyentry{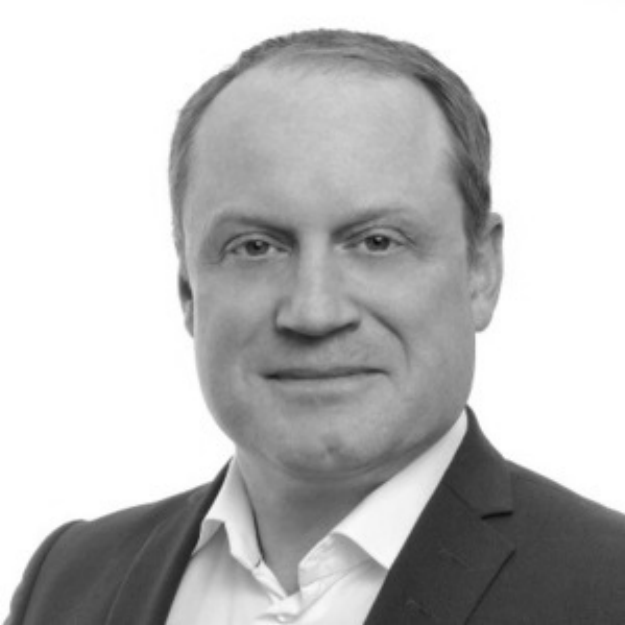}{Jonas Gillberg}{%
received the Ph.D.\ degree in Automatic Control from Link\"oping
University, Link\"oping, Sweden, in 2006, where his thesis work on
frequency-domain system identification was supervised by
Prof.~Lennart Ljung.  He was previously with IBM~Research in the
Quantum Computing division and is currently with ZyQE in
Stockholm, Sweden.  His research interests in
control theory are numerical methods, system identification, and convex
optimization.}

\biographyentry{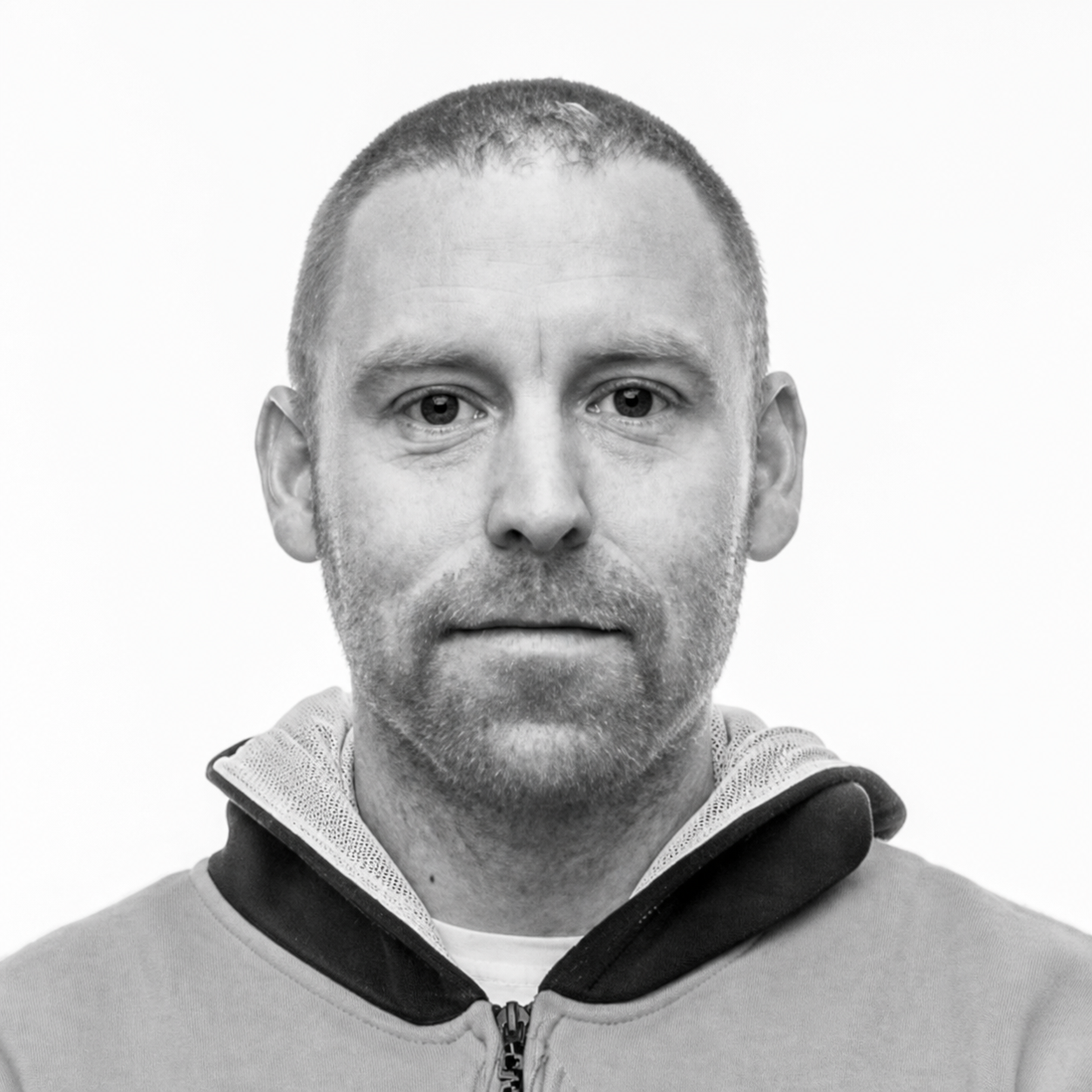}{Johan L\"ofberg}{%
received the Ph.D.\ degree in Automatic Control from Link\"oping
University, Link\"oping, Sweden, in 2003, with a thesis on minimax
approaches to robust model predictive control.  After three years as a
postdoctoral researcher at the Automatic Control Laboratory, ETH
Z\"urich, Switzerland, he returned to Link\"oping University, where he
was appointed Docent in 2011 and is currently Associate Professor in the
Division of Automatic Control, Department of Electrical Engineering.  His
research lies at the border between control theory and optimization, and
he is the developer of the YALMIP modelling language for optimization
in MATLAB.}
\fi

\begingroup\small
\ifarxivhtml

\else
\bibliographystyle{elsarticle-harv}
\bibliography{paper}
\fi
\endgroup

\end{document}